\documentclass[11pt]{article}
\usepackage[english]{babel}
\usepackage[utf8]{inputenc}
\usepackage[T1]{fontenc}
\usepackage{authblk}
\usepackage{graphicx}
\usepackage{csquotes}
\graphicspath{ {figures/} }
\usepackage{float}
\usepackage{tikz}
\tikzset{every picture/.style={line width=0.75pt}} 
\usetikzlibrary{patterns}
\usepackage{amsmath,mathrsfs,dsfont}
\usepackage{amsthm}
\usepackage[all]{xy}
\usepackage{graphics}
\usepackage{dirtytalk}
\usepackage{tensor}
\usepackage{verbatim}
\usepackage{nicefrac}
\usepackage{amsfonts}
\usepackage{amssymb,latexsym}
\usepackage{color}
\usepackage{mathrsfs,dsfont}
\usepackage{physics}
\usepackage{mathtools}
\usepackage{bbm}
\usepackage{appendix}
\usepackage{slashed}
\usepackage{cite}
\usepackage{url}
\usepackage{hyperref}
\usepackage{comment}
\newtheorem{theorem}{Theorem}[]
\newtheorem{corollary}[theorem]{Corollary}
\newtheorem{lemma}[theorem]{Lemma}
\newtheorem{remark}[theorem]{Remark}

\newtheorem{proposition}{Proposition}[section]

\numberwithin{equation}{section}

\newcommand{\ch}{{\large\bf CHANGE !!!  }}
\newcommand{\nn}{\nonumber}
\newcommand{\dis}{\displaystyle}
\newcommand{\bea}{\begin{eqnarray}}
\newcommand{\eea}{\end{eqnarray}}
\newcommand{\supp}{\operatorname{supp}}
\newcommand{\dist}{\operatorname{dist}}

\begin{document}

\title{A Nontrivial Lattice Scalar Quantum Field Model}
\author[a]{Paulo A. Faria da Veiga\thanks{veiga@icmc.usp.br}}
\author[a]{Ravi Mistry}
\author[a]{Michael O'Carroll\thanks{\em In Memorian.}}
\affil[a]{Instituto de Ciências Matemáticas e de Computação (ICMC), 
Universidade de São Paulo,
Avenida Trabalhador São-carlense 400, 13566-590 São Carlos SP, Brazil}

\date{\today}

\maketitle

\begin{abstract}
We introduce a new Euclidean quantum model with zero-spin scaled scalar fields on a spacetime hypercubic lattice, which we refer to as the hybrid model. The model coupling parameters are the hopping parameter $\kappa^2>0$ and the field interaction coupling $\alpha>0$. Using the technique of lattice scaled fields, with a noncanonical scaling that corresponds to a partial renormalization (wave function renormalization), we show that the partition function of this model satisfies both thermodynamic and ultraviolet stability bounds, with constants which are uniform in the lattice volume $\Lambda_s$ and the lattice spacing $a\in(0,1]$. As a consequence, the model has a finite free energy per degree of freedom in the thermodynamic limit $\Lambda_s\nearrow\infty$, at least in the subsequential sense. Subsequently, we can also take the ultraviolet or continuum limit $a\searrow 0$. Besides the finiteness property of the free energy, for sufficiently small values of the model coupling parameters $\kappa^2$ and $\alpha$, we show the model satisfies Osterwalder-Schrader positivity and construct the underlying physical quantum mechanical Hilbert space, and, using a polymer expansion, we show correlations also remain bounded in the thermodynamic limit. Finally, we perturbatively compute an infinite volume truncated four-point correlation showing it is nonzero. Noting that, for fixed $a$, perturbation series in $\kappa^2$ and $\alpha$ converges, these results are sufficient to show that our hybrid model makes sense as a lattice quantum field theory in spacetime dimension $d=3,4$, with possible extensions to other higher values of $d$.
\end{abstract}

\tableofcontents

\vspace{8mm}

\noindent \ch  No {\bf Hybrid} in TITLE. No ONE KNOWS THIS NAME AT tHE BEGINNING !

\section{\label{sec:level1} Introduction and the Model Definition}

Although significant progress has been made, we have never been able to prove the compatibility between special relativity and quantum mechanics in a system with an infinite degrees of freedom. There is no rigorous proof of existence for a nontrivial quantum field model in Euclidean spacetime that exists in both the thermodynamic and ultraviolet (UV) limits and verifies the Osterwalder-Schrader-Seiler axioms \cite{Osterwalder:1973dx,Osterwalder:1974tc,Osterwalder1978gauge,Glimm:1987ng,Seiler:1982pw}. Recently, the quest for nontriviality has gained even greater importance, as the scalar $\lambda \phi^4$ model was shown to be trivial and non-interacting (Gaussian) for $\lambda > 0$ in $d$ Euclidean dimensions, using probabilistic methods \cite{Aizenman:2019yuo}.

In this context, using the formalism of scaled fields which incorporates a partial renormalization in the UV through a noncanonical scaling (a sort of an a priori wavefunction renormalization) \cite{OCarroll:2018uom}, we introduce a new interacting scaled toy model for the real scalar field $\phi$, which we refer to as the \emph{hybrid model}. The model is defined on a finite $d$-dimensional Euclidean hypercubic lattice $\Lambda \subset a\mathbb{Z}^d$, with the Euclidean spacetime dimension $d$, a lattice spacing $a \in (0,1]$ and $\Lambda_s$ sites. We denote its partition function by $Z_\Lambda(\kappa^2,\alpha)$. For each lattice site $x \in \Lambda$, we assign a field variable $\phi_x \equiv \phi(x)$. The forward-translated site in the $\mu$ direction is denoted by $x_\mu^+$ and is defined as $x_\mu^+ \doteq x + a\, e^\mu$, where $e^\mu$ is the unit vector in the $\mu$-th spacetime direction. The label $\mu = 0$ denotes the Euclidean time direction.

The partition function of the hybrid model on $\Lambda$ takes the form
\bea\label{hybrid-partition}
Z_{\Lambda}(\kappa^{2},\alpha)
=\int d\mu(\phi)\, B_\Lambda(\phi)\,,
\eea
where the hybrid model Boltzmann factor $B_\Lambda(\phi)$ is given by
\bea\label{action}
B_\Lambda(\phi)\,=\exp\!\Big(\kappa^{2}\sum_{x,\mu}\phi_{x}\phi_{x_{\mu}^{+}}\Big)
\prod_{x,\mu}\big(1+\alpha\,\phi_{x}\phi_{x_{\mu}^{+}}\big),
\eea
with a hopping parameter $\kappa^2>0$, and the sum and the product running over $x = (x^0, x^1,\ldots, x^{d-1})\in\Lambda$ and $\mu=0,1,\dots,d-1$. The measure appearing in \eqref{hybrid-partition} is the normalized Gaussian product measure
\bea \label{measure}
d\mu(\phi)=\prod_{x\in\Lambda}\frac{d\phi_x}{\sqrt{2\pi}}\,e^{-\frac{1}{2}\phi_x^2},
\eea
of Gaussian measures with mean–zero and unit covariance.

For $n=0,1,2,\ldots$, the $2n$-point correlations are defined by by the statistical averages
\bea \label{2n-pt.funct.}
G_\Lambda(x_1,\ldots,x_{2n})\doteq\frac{1}{Z_\Lambda}\;\int d\mu(\phi)\quad \phi_{x_1}\ldots\phi_{x_{2n}}\,B_\Lambda(\phi)\,.
\eea
For an odd number of external fields, the correlations vanish by the symmetry $\phi\rightarrow (-\phi)$.

We note that the nonlocal $\alpha$-polynomial interacting terms, with a coupling $\alpha\geq 0$, correspond to an $\alpha$ Taylor expansion truncation of the exponential term
\bea
\exp\!\Big[\alpha\,\sum_{x,\mu} \phi_x\,\phi_{x^+_\mu}\Big]\,.
\eea
When the full exponential is retained, the model becomes a scaled free scalar model with a modified hopping parameter $(\kappa^2+\alpha)>0$ rather than $\kappa^2>0$. The (exponential) scaled free field model partition function gives an upper bound on the hybrid model partition function. Setting $\alpha=0$, we obtain the usual scaled free field model with hopping parameter $\kappa^2$ which is well understood (see, e.g., appendix~A in \cite{daVeiga:2022twu} for a detailed analysis). 

From this, considering the hybrid model \eqref{hybrid-partition} in generic spacetime dimension $d\geq 2$, one must have $(\kappa^2+\alpha)\in(0,1/2d]$, and the value $(\kappa^2+\alpha)=1/2d$ describes the massless, critical model case. This is the content of Lemma \ref{lemma_TUV} below. Specifically, for $\alpha>0$, we obtain a triangular domain around the origin in the first quadrant of the $\kappa^2$--$\alpha$ plane, with sides of length $1/2d$.

Now, as we argue below, when one considers, instead of the exponential interaction, the polynomial $\alpha$-interaction as in \eqref{action}, some of our results extend to an infinite horizontal rectangular strip of height $1/2d$ and limited below by the horizontal $\alpha$ axis. In particular, this infinite domain holds when proving the Thermodynamic and Ultraviolet stable stability bounds or, simply, TUV bounds on the partition function $Z_\Lambda(\kappa^2,\alpha)$. 

In the sequel, we first prove that the hybrid model partition function $Z_\Lambda(\kappa^2,\alpha)$ satisfies the TUV bounds. Namely, there exist two real constants $C_\ell$ and $C_u$, independent of the {\em the lattice volume} $\Lambda_s$ and the lattice spacing $a$, such that
\bea \label{TUV}
\exp\big[C_\ell \Lambda_s\big] \leq Z_\Lambda(\kappa^2,\alpha) \leq \exp\big[C_u \Lambda_s\big].
\eea
The TUV bounds in \eqref{TUV} leads, via the Bolzano-Weierstrass theorem (see e.g. thms. 2.41 and 2.42 in \cite{Rudin1976principles}), to a normalized free energy per degree of freedom
\bea \label{FE}
F_\Lambda(\kappa^2,\alpha) \equiv \frac{1}{\Lambda_s} \ln Z_\Lambda(\kappa^2,\alpha)\,,
\eea
which exists in the thermodynamic limit as $\Lambda_s \nearrow \infty$ and in the continuum limit as $a \searrow 0$, at least in the sense of subsequences. The finiteness of the free energy in the thermodynamic and UV limits explains why we call the stability bound of (\ref{TUV}) as thermodynamic and ultraviolet stable stability bounds or, simply, TUV bounds.
\begin{remark}
Note that, in order to show the existence of the normalized free energy in the thermodynamic and continuum limits, in the TUV bounds of  \ref{TUV}, it is fundamental same type of dependence on the {\em volume} $\Lambda_s$ of $\Lambda$, as well as to have the constants $C_\ell$ and $C_u$ independent both on $\Lambda_s$ and then lattice spacing $a$. Although the existence and finiteness of (the limits being understood in the subsequential sense)	$$F(\kappa^2,\alpha)\,\equiv\,\lim_{a\searrow 0}\,\;\left[\lim_{\Lambda_s\nearrow \infty}\;F_\Lambda(\kappa^2,\alpha)\right]$$ 
does not characterize the limiting model, it does prove an important finiteness property for the model.
\end{remark}

Throughout this paper, for simplicity of the analysis, we take the {\em small parameter domain} 
\bea \label{small domain}
0 <\kappa^2<\kappa_0^2 \ll 1\quad\mathrm{and}\quad 0<\alpha<\alpha_0 \ll 1\,, 
\eea
for some constants $\kappa_0,\alpha_0>0$. Note that, within the small-parameter domain \eqref{small domain}, no large-field problem occurs. Integrability of the fields at infinity is ensured by the local Gaussian measure \eqref{measure}.

The results presented below hold assume the small parameter domain in \ref{small domain} and most of our results extend to spacetime dimension $d \geq 3$. Although our primary interest is in spacetime dimension $d=4$, several statements are therefore formulated for general $d \geq 3$. (For $d=2$, as in the continuum spacetime formulation, difficulties arise due to singularities of the Laplacian resolvent at coincident points.)

We remark that, for sufficiently small $\kappa^2>0$ and $\alpha>0$, we can employ perturbation-like methods, which are more easily controlled, and we can construct e.g. convergent polymer expansions for the model correlations, following \cite{Glimm:1987ng, Simon1993}. It must be emphasized, however, that with additional effort our results here can be extended from the small parameter domain to larger regions of the  $(\kappa^2,\alpha)$ plane.

We conclude the description of the model by noting that two of us developed the lattice scaled field method in \cite{FariadaVeiga:2019cxp,OCarroll:2018uom}, applicable to both bosonic and fermionic Euclidean field models. For real-valued scalar boson fields, the relation between the scaled and unscaled (physical, original) fields, denoted with an upper index $u$, is
\bea \label{scaling}
\phi(x)=a^{(d-2)/2}\,\left(m_u^2a^2+2d\kappa_u^2\right)^{1/2}\,\phi^u(x).
\eea
Here $m_u\ge 0$ and $\kappa_u^2>0$ are the unscaled, physical bare mass and hopping parameter, respectively. As obtained in \cite{daVeiga:2022twu}, the relation between the scaled parameter $\kappa$ of \eqref{action} and the unscaled, original hopping parameter $\kappa_u$ is given by
\bea\label{kappau}
\kappa^2\,=\,\left[2d\,+\,\left( \dfrac{m_ua}{\kappa_u} \right)^2 \right]^{-1}\,.
\eea
Roughly speaking, our non canonical scaling is gives a partial renormalization that smooths ultraviolet singularities and yields to greater regularity for the model in the continuum limit $a\searrow 0$.

The scaled field method has also been successfully applied more recently to lattice Yang--Mills models constructed with the Wilson plaquette action \cite{daVeiga:2022twu}. Again, although the method does not resolve the fundamental question of the existence and physical properties of the limiting models in the thermodynamic ($\Lambda_s\nearrow\infty$) and continuum or UV ($a\searrow 0$) limits, it does allow us to prove important results on the finiteness of many relevant quantities of the model. For instance, correlations are finite for the scaled free model. Moreover, plaquette field correlations are finite in $d=4$ Yang-Mills theory, and the two-point correlation of the scaled free model remains finite even at coincident points for $d\geq 3$.

The paper is organized as follows. In Section~\ref{sec2}, we present our main results, formulated in several lemmas and theorems. Section~\ref{sec3} contains the proof of the TUV bounds (Lemma~\ref{lemma_TUV}) and the finiteness of correlations, Theorem \ref{thm. 2-pt. funct.} and its corollary. TUV bounds lead to a finite scaled free energy per degree of freedom. Together with the finiteness of correlations shown, we have a warranty of a proper mathematical existence of the model. Next, in Section~\ref{sec4}, we compute a truncated four-point function and show it is nonzero. This analysis needs the knowledge of the two-point correlation. These results are established in Theorem \ref{thm. 4-pt. funct.}. From these results, we conclude that the hybrid model fulfill the requirements to be the first known example of a both mathematically well-defined and nontrivial quantum field theory model. Finally, Section~\ref{sec5} provides our concluding remarks.

\section*{Acknowledgments}
PAFV and RM thank very much the late friend and colleague Michael O´Carroll for collaboration on this work. PAFV acknowledges and thanks Michael O´Carroll for a long standing collaboration. The work of RM was supported by the CNPq PDJ grant 168581/2023-0. RM also thanks the Institute of Physics of São Carlos (IFSC) and the Institute of Mathematical Sciences and Statistics (ICMC), University of São Paulo at São Carlos, for hospitality during the preparation of this work.

\section{Preliminaries and Results} \label{sec2}

\subsection{The Physical Hilbert Space} \label{sec2.1}
The standard Osterwalder--Schrader (O-S) axioms \cite{Osterwalder:1973dx, Osterwalder:1974tc, Glaser:1974hy} for the scalar fields are the foundations of any well-defined (Euclidean) quantum field theory model. The Osterwalder--Seiler extension \cite{Osterwalder1978gauge, Seiler:1982pw} to include fermions is also a crucial component of these foundations. This requirement, commonly referred to as reflection positivity (see \cite{Neeb20018, Seiler:1982pw}), is formulated through the antilinear time-reflection operator $\Theta$, which reverses the Euclidean time coordinate, $x^{0}\mapsto -x^{0}$, while taking the complex conjugate of the functional values. In particular, $\Theta$ maps each field configuration to its time-reflected image, ensuring that the reflected product is nonnegative for all suitable functionals $F$ supported in the positive-time region. Although establishing this property is often the most delicate step, in the present hybrid model, this property follows naturally. Moreover, for $\kappa^{2}>0$ and $\alpha>0$, and along with Griffiths I inequality for ferromagnetic-like spin systems (see, e.g., \cite{Glimm:1987ng}), the hybrid Boltzmann factor \eqref{action} and the normalized Gaussian measure \eqref{measure} are both invariant under time reflection. As a result, expectation values constructed with $\Theta$ satisfy the desired positivity condition. Once this property is established, the physical (separable) Hilbert space associated with the Euclidean lattice theory can be constructed. In the end, if all the other axioms are verified (see also \cite{Glimm:1987ng}), in the thermodynamics limit and the continuum limit, one obtains a corresponding hybrid model in Minkowski spacetime that satisfies the Wightman axioms \cite{Streater2000pct,Jost:1965}. It is undeniable that the continuum limit remains an open challenge; however, with our scaled fields~\eqref{scaling}, the limiting behaviors appear to improve, as will be discussed later, see also \cite{daVeiga:2022twu, FariadaVeiga:2019cxp,OCarroll:2018uom}.

Let $\Lambda\subset\mathbb{Z}^d$ be a finite hypercubic lattice. Our aim here is to describe as simply as possible the construction of the underlying physical Hilbert space $\mathcal{H}$ of the model. To this end, we first construct a pre-Hilbert space carrying a positive definite inner product; its completion will then yield $\mathcal{H}$.


Assume that $\Lambda$ is symmetric with respect to the hyperplane $x^0=0$, and consider the decomposition
\bea
\Lambda=\Lambda_+\cup\Lambda_-,\qquad \Lambda_+\cap\Lambda_-=\emptyset,
\eea
where $\Lambda_+$ (respectively $\Lambda_-$) denotes the \emph{positive-time} (respectively \emph{negative-time}) part of the lattice. Let $\mathcal{A}_\Lambda$ be the \emph{complex involutive algebra} generated by the family of scalar fields $\{\phi_x\}_{x\in\Lambda}$; that is, $\mathcal{A}_\Lambda$ consists of all finite polynomials in the field variables with complex coefficients.\footnote{One may regard $\mathcal{A}_\Lambda$ as an involutive algebra over the involutive base ring $\mathbb{C}$. The involution is taken as part of the algebraic structure. Specifically, it is a conjugate linear, anti multiplicative map whose restriction to the scalars coincides with complex conjugation. This contrasts with operator algebraic approaches, where the involution is concretely realized as the adjoint of operators acting on a Hilbert space. For example, an involutive algebra of bounded operators on a Hilbert space admits a norm closure which is a $C^*$-algebra; moreover, if it is closed in the weak (or strong) operator topology, it is a von Neumann algebra.} The expectation functional
\bea
\expval{\cdot}_\Lambda:\mathcal{A}_\Lambda\to\mathbb{R}
\eea
is defined by the integral measure introduced in \eqref{measure}, i.e. by integration against the normalized Gaussian product measure weighted with the Boltzmann factor \eqref{action}, and is assumed to be reflection invariant.
Specifically, let $r:\Lambda\to\Lambda$ denote the reflection at the hyperplane $x^0=0$, so that $r(\Lambda_+)=\Lambda_-$ and $r(\Lambda_-)=\Lambda_+$. This induces an antilinear time-reflection operator
\bea
\Theta:\mathcal{A}_\Lambda\to \mathcal{A}_\Lambda
\eea
defined by
\begin{enumerate}
  \item $\Theta(\phi_x)=\phi_{r x}$ for all $x\in\Lambda$;
  \item $\Theta(FG)=\Theta(F)\Theta(G)$ for all $F,G\in \mathcal{A}_\Lambda$;
  \item $\Theta(\lambda F)=\bar{\lambda}\,\Theta(F)$ for all $\lambda\in\mathbb{C}$, $F\in \mathcal{A}_\Lambda$,
\end{enumerate}
and, in particular, $\Theta(\mathcal{A}_{\Lambda_+})\subset \mathcal{A}_{\Lambda_-}$, where $\mathcal{A}_{\Lambda_\pm}$ denotes the subalgebra generated by $\{\phi_x \mid x\in\Lambda_\pm\}$.

\begin{theorem}[Reflection positivity] \label{O-S positivity}
For every $F\in \mathcal{A}_{\Lambda_+}$ one has
\bea
\expval{\Theta F \cdot F}_\Lambda \;\geq\; 0.
\eea
\end{theorem}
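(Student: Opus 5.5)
The plan is the standard Osterwalder--Schrader argument for nearest-neighbour lattice actions. First I fix the geometry. Since $\Lambda_+\cap\Lambda_-=\emptyset$ and $r$ swaps them, the reflection hyperplane $x^0=0$ contains no sites. So sites sit at half-integer time multiples of $a$, and the bonds $\langle x,x^+_\mu\rangle$ split into three classes. Bonds with both endpoints in $\Lambda_+$ form the set $\mathcal{B}_+$. Bonds with both endpoints in $\Lambda_-$ form the set $\mathcal{B}_-=r(\mathcal{B}_+)$. The remaining bonds are the time-like bonds $\mathcal{B}_c$ joining $y=r x\in\Lambda_-$ to $x\in\Lambda_+$ across the plane (and, if periodic boundary conditions are used, also the wrap-around time bonds, which cross the second reflection plane and are treated identically). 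The Boltzmann factor \eqref{action} is a product of bond factors $b(\phi_x,\phi_y)=e^{\kappa^2\phi_x\phi_y}(1+\alpha\phi_x\phi_y)$, which are real and symmetric in their arguments. Hence
\[
B_\Lambda(\phi)=B_+(\phi)\,\Theta\big(B_+\big)(\phi)\,B_c(\phi),\qquad B_+=\prod_{b\in\mathcal{B}_+}b ,\quad B_c=\prod_{x\in\partial\Lambda_+}b(\phi_x,\phi_{rx}),
\]
where I use $\Theta(B_+)=\prod_{\mathcal{B}_-}b$, which holds because $B_+$ has real coefficients. The measure \eqref{measure} also factorizes as $d\mu_+\,d\mu_-$, with $d\mu_-$ the reflected image of $d\mu_+$.

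The key step is to show that the crossing factor is a positive combination of reflected products. For each crossing bond I expand
\[
e^{\kappa^2\phi_x\phi_{rx}}(1+\alpha\phi_x\phi_{rx})=\sum_{n\ge0}c_n\,\phi_x^n\,\phi_{rx}^n,\qquad c_n=\frac{\kappa^{2n}}{n!}+\alpha\frac{\kappa^{2(n-1)}}{(n-1)!}\ge0,
\]
with $c_0=1$. This is exactly where $\kappa^2>0$ and $\alpha>0$ enter: it is the Griffiths-type positivity alluded to in Section~\ref{sec2.1}. Taking the product over the finitely many crossing bonds gives $B_c=\sum_{\mathbf n}c_{\mathbf n}\,G_{\mathbf n}\,\Theta(G_{\mathbf n})$, where $c_{\mathbf n}\ge0$ and $G_{\mathbf n}=\prod_x\phi_x^{n_x}$ is a real monomial in $\mathcal{A}_{\Lambda_+}$.

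Next I insert this into the expectation. For $F\in\mathcal{A}_{\Lambda_+}$ I write $H_{\mathbf n}=F\,B_+\,G_{\mathbf n}$, which depends only on the fields in $\Lambda_+$. Then
\[
Z_\Lambda\,\expval{\Theta F\cdot F}_\Lambda=\sum_{\mathbf n}c_{\mathbf n}\int d\mu_-\,\Theta(H_{\mathbf n})\int d\mu_+\,H_{\mathbf n}=\sum_{\mathbf n}c_{\mathbf n}\Big|\int d\mu_+\,H_{\mathbf n}\Big|^2\ge0 .
\]
The middle equality uses Fubini and the identity $\int d\mu_-\,\Theta(H)=\overline{\int d\mu_+ H}$. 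That identity follows from the antilinearity of $\Theta$ together with the reflection invariance of the product Gaussian measure. Since $Z_\Lambda>0$ by the lower TUV bound (Lemma~\ref{lemma_TUV}), the claim follows.

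The main obstacle is justifying the interchange of the infinite sum over $\mathbf n$ with the integrals. I would dominate the partial sums by $|F|\,\prod_{\text{all bonds}}e^{\kappa^2|\phi_x\phi_y|}(1+\alpha|\phi_x\phi_y|)$, which is bounded by $|F|\exp\!\big[(\kappa^2+\alpha)\sum_{x,\mu}|\phi_x\phi_{x^+_\mu}|\big]$. Using $|\phi_x\phi_y|\le\frac12(\phi_x^2+\phi_y^2)$, the exponent is at most $d(\kappa^2+\alpha)\sum_x\phi_x^2$. This is integrable against \eqref{measure} times any polynomial provided $2d(\kappa^2+\alpha)<1$, which holds in the small parameter domain \eqref{small domain}. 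Dominated convergence then applies, and each $\int d\mu_+ H_{\mathbf n}$ is finite. A secondary point to check carefully is the boundary geometry: with periodic time boundary conditions there are two crossing planes, and both sets of crossing bonds must be absorbed into $B_c$ in the same way.
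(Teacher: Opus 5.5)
Your argument is correct, and it is the standard Osterwalder--Schrader/Seiler proof that the paper only cites. The factorization $B_\Lambda=B_+\,\Theta(B_+)\,B_c$, with each crossing bond (periodic wrap-around bonds included) expanded as $\sum_n c_n\phi_x^n\phi_{rx}^n$ with $c_n\ge 0$, is exactly what adapting Seiler's argument means here, and your dominated-convergence step is valid under \eqref{1/2d}. Two small fixes: the dominating function should carry $|F\,\Theta F|$ rather than $|F|$, and for $Z_\Lambda>0$ you should cite $Z_\Lambda\ge 1$ from the remark after the proof of Lemma~\ref{lemma_TUV}, because the paper's proof of the lower TUV bound restricts a sign-indefinite integrand to $S_{\mathcal E}$ and so is not reliable.
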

\begin{proof}
Although technically involved, the proof follows essentially the same line of reasoning as in \cite{Seiler:1982pw}, adapted to the present scaled scalar field setting; see also \cite{Osterwalder:1973dx,Osterwalder:1974tc,Glaser:1974hy}.
\end{proof}

This reflection positivity property allows us to construct the quantum mechanical \emph{complex Hilbert space} $\mathcal{H}$. Define the null space
\bea
\mathcal{K}\doteq\Big\{F\in \mathcal{A}_{\Lambda_+}\,\Big\lvert\,\langle \Theta F\cdot F\rangle_\Lambda=0\Big\}.
\eea
Then the quotient space $\mathcal{A}_{\Lambda_+}/\mathcal{K}$ becomes a pre-Hilbert space with inner product
\bea
\big([F],[G]\big)_{\mathcal{H}}\doteq \langle \Theta F\cdot G\rangle_\Lambda,\qquad F,G\in \mathcal{A}_{\Lambda_+},
\eea
where $[F]$ denotes the equivalence class of $F$ in $\mathcal{A}_{\Lambda_+}/\mathcal{K}$. This is the usual Feynman–Kac type formula, cf. \cite{Glimm:1987ng,RS1975ii}. In particular, in terms of the scalar fields one has
\bea \label{H-inner prod.}
\big([\phi_x],[\phi_y]\big)_{\mathcal{H}} = \langle \phi_{r x}\,\phi_y\rangle_\Lambda,\qquad x,y\in\Lambda_+,
\eea
Finally, the completion of this pre-Hilbert space with respect to $(\cdot,\cdot)_{\mathcal{H}}$,
\bea
\mathcal{H}\doteq\big[\mathcal{A}_{\Lambda_+}/\mathcal{K}\big]_c,
\eea
is what we call the \emph{physical Hilbert space} of the model. The existence of this Hilbert space, together with the O-S axioms, enables the construction of a well-defined Minkowski spacetime model in the Wightman framework. This is the underlying motivation for the results presented below.

\subsection{Main results}

Although we are mainly interested in $d=4$, we emphasize that several of the results below generalize to broader values of the spacetime dimension $d$ and for larger admissible ranges of $\kappa^2$ and $\alpha$, encompassing the small-parameter domain \eqref{small domain}.

Let us denote the lattice bond $b$, between the sites $x$ and $x_{\mu}^{+}$,  by $b=(x,x_{\mu}^{+})$ and define
\bea \label{def. Xb}
X_b(\phi) \doteq \phi_x\phi_{x_{\mu}^{+}}.
\eea
Recalling $\Lambda_s$ denotes the number of lattice sites in the lattice $\Lambda$, we denote by $\Lambda_b$ the number of bonds in $\Lambda$. We also assume, for $\kappa^2,\alpha>0$, that
\bea \label{1/2d}
0<\kappa^{2}+\alpha<\frac{1}{2d}.
\eea
Let $\Delta$ denote the nonnegative {\em minus} lattice Laplacian operator.  We adopt the usual periodic boundary conditions on $\Lambda$ \cite{daVeiga:2022twu}. In this case, the $\Delta$ and other related operators are well defined (see below).

\begin{lemma} \label{lemma_TUV}
Let $d=3,4,\ldots$ denote the lattice spacetime dimension and $\mathcal{E}>0$. Define
\bea \label{def. S_E}
S_{\mathcal{E}}\doteq\Bigl\{\phi:\Lambda\to\mathbb{R}\,\,\Bigl\lvert\,\,|X_b|\le\mathcal{E}\,\,\forall\, b\Bigr\}.
\eea
Then using periodic boundary conditions on $\Lambda$ and with the notation given above, one has the two-sided TUV stability bound for the partition function  $Z_\Lambda(\kappa^{2},\alpha)$ given in \eqref{hybrid-partition}
\bea\label{Z-bounds}
e^{C_\ell\Lambda_{s}}
\;\le\; Z_{\Lambda}(\kappa^{2},\alpha)
\;\le\; e^{C_u\Lambda_{s}},
\eea
where the real constants are\footnote{Here, the operator is understood in terms of its spectral representation.}
\bea
C_\ell\doteq -d\kappa^{2}\mathcal{E}, \qquad C_u\doteq  \frac{1}{2\left(\kappa^{2}+\alpha\right)}\left(\Delta+r\right)^{-1} (0,0),
\eea
and
\bea\label{var. r}
r=r(\kappa^2,\alpha)\doteq(\kappa^{2}+\alpha)^{-1}-2d\,.
\eea
\end{lemma}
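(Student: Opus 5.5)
Both bounds reduce to controlling the Boltzmann factor \eqref{action} pointwise and then integrating against the Gaussian product measure \eqref{measure}.

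\emph{Upper bound.} We use the elementary inequality $1+t\le e^{t}$, valid for all real $t$, with $t=\alpha X_b$ on each bond. This gives
\begin{equation*}
B_\Lambda(\phi)\le \exp\Big((\kappa^2+\alpha)\sum_b X_b\Big),
\end{equation*}
so $Z_\Lambda(\kappa^2,\alpha)$ is bounded by the partition function of the scaled free field with hopping parameter $\kappa^2+\alpha$.

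With periodic boundary conditions we have $\sum_b X_b=\sum_{x,\mu}\phi_x\phi_{x_\mu^+}=d\sum_x\phi_x^2-\tfrac12(\phi,\Delta\phi)$, with $\Delta$ the dimensionless nonnegative lattice Laplacian. The total exponent is therefore
\begin{equation*}
-\tfrac12(\kappa^2+\alpha)\big(\phi,(\Delta+r)\phi\big),
\end{equation*}
where $r=(\kappa^2+\alpha)^{-1}-2d>0$ by \eqref{1/2d}. This quadratic form is positive definite, so the Gaussian integral gives
\begin{equation*}
Z\le \big[\det\big((\kappa^2+\alpha)(\Delta+r)\big)\big]^{-1/2}.
\end{equation*}
By translation invariance, $\ln\det M=\Lambda_s\,(\ln M)(0,0)$ for any translation-invariant operator $M$, so $Z\le e^{c\Lambda_s}$ with $c=-\tfrac12(\ln[(\kappa^2+\alpha)(\Delta+r)])(0,0)$. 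To reach the stated form of $C_u$, I would apply the scalar inequality $-\ln y\le y^{-1}$ for $y>0$ spectrally, with $y=(\kappa^2+\alpha)(\Delta+r)$. This yields the bound $C_u=\frac{1}{2(\kappa^2+\alpha)}(\Delta+r)^{-1}(0,0)$.

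Since $\Delta$ has spectrum in $[0,4d]$ and $r>0$, this diagonal element is bounded uniformly in $\Lambda_s$. It is also independent of $a$, because the scaled model uses the dimensionless Laplacian. For $d\ge3$ it is in fact bounded uniformly even as $r\to0$, by integrability of the lattice Green function; this is where $d\ge3$ enters.

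\emph{Lower bound.} The natural approach is to restrict the integral to $S_{\mathcal E}$. There $X_b\ge-\mathcal E$, so the exponential factor is at least $e^{-\kappa^2\mathcal E\Lambda_b}=e^{-d\kappa^2\mathcal E\Lambda_s}$, using $\Lambda_b=d\Lambda_s$ for periodic boundary conditions. Each factor $1+\alpha X_b$ is at least $1-\alpha\mathcal E$, which is positive provided $\alpha\mathcal E<1$.

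The main obstacle is that $\mu(S_{\mathcal E})$ and $(1-\alpha\mathcal E)^{\Lambda_b}$ are themselves exponentially small in $\Lambda_s$, so they do not disappear into $C_\ell=-d\kappa^2\mathcal E$. I would avoid the restriction entirely and use positivity instead. Expand $\prod_b(1+\alpha X_b)$ as $1$ plus a sum of monomials in the fields with nonnegative coefficients. Griffiths I, in the form already cited in Section~\ref{sec2.1}, applied to the ferromagnetic measure $e^{\kappa^2\sum X_b}d\mu$ with even single-site distributions, shows that each such monomial has nonnegative expectation. Hence
\begin{equation*}
Z\ge\int d\mu\,e^{\kappa^2\sum_b X_b}\ge e^{\kappa^2\sum_b\int X_b\,d\mu}=1
\end{equation*}
by Jensen's inequality, since $\int X_b\,d\mu=0$.

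Since $1\ge e^{C_\ell\Lambda_s}$ for $C_\ell=-d\kappa^2\mathcal E\le0$, the stated lower bound follows for every $\mathcal E>0$. The set $S_{\mathcal E}$ is then only an auxiliary device, and the restriction argument is needed at most to motivate the form of $C_\ell$.
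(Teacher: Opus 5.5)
Your upper bound starts with a step that fails as written. From $1+t\le e^{t}$ on each bond you cannot conclude the pointwise bound $B_\Lambda(\phi)\le\exp\bigl((\kappa^2+\alpha)\sum_bX_b\bigr)$. The factors $1+\alpha X_b$ can be negative, and inequalities whose left sides may be negative cannot be multiplied. Concretely, put $\phi=A,-A,A$ on three consecutive sites of a line (lattice side at least $4$) and $\phi=0$ elsewhere. Then $B_\Lambda(\phi)\,e^{-(\kappa^2+\alpha)\sum_bX_b}=(1-\alpha A^2)^2e^{2\alpha A^2}$, which is much larger than $1$ for large $A$. The paper displays the same pointwise inequality, but it invokes Griffiths I at exactly this point, and that is what actually carries the step.

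The inequality you need holds only after integration. Expand $e^{\alpha X_b}=\sum_{n\ge0}\alpha^nX_b^n/n!$ on each bond. The terms with every $n_b\in\{0,1\}$ reproduce exactly $Z_\Lambda(\kappa^2,\alpha)$. Every other term is a positive coefficient times $\int d\mu\,e^{\kappa^2\sum_bX_b}\prod_bX_b^{n_b}$, which is $\ge0$ by Griffiths I, the same fact you use for the lower bound. Exchanging sum and integral is justified by Tonelli, since $\sum_b|X_b|\le d\sum_x\phi_x^2$ and $d(\kappa^2+\alpha)<1/2$ by \eqref{1/2d}. From there on, your Gaussian computation, the spectral use of $\ln y^{-1}\le y^{-1}$, and the translation-invariance identity for the trace are the same as the paper's.

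Your lower bound takes a different route from the paper's main argument, and yours is the sound one. The paper restricts to $S_{\mathcal E}$ and uses $\prod_b(1+\alpha X_b)\ge 1-\alpha\sum_b|X_b|$, which already needs $\alpha\mathcal E\le1$. It then asserts $\int_{S_{\mathcal E}}d\mu\,\bigl(1-\alpha\sum_b|X_b|\bigr)=1$. This is false: $\mu(S_{\mathcal E})$ is exponentially small in $\Lambda_s$, and $\int|X_b|\,d\mu>0$. That is precisely the obstruction you identified.

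Your Griffiths I plus Jensen argument gives $Z_\Lambda\ge1\ge e^{C_\ell\Lambda_s}$ for every $\mathcal E>0$. It is essentially the positivity argument in the Remark following the paper's proof, and it makes $S_{\mathcal E}$ unnecessary.

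One side comment needs correcting. On a finite periodic lattice, the zero mode of $\Delta$ contributes $1/(\Lambda_s r)$ to $(\Delta+r)^{-1}(0,0)$. So the uniformity as $r\to0$ for $d\ge3$ holds only after the thermodynamic limit. The lemma does not need it, since it assumes $r>0$.
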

\begin{remark}
The TUV stability bound is valid in the extended domain of the parameters corresponding to the infinite rectangle in the $(\alpha,\kappa^2)$ plane, given by  $\kappa^2\in[0,1/2d]$ and $\alpha>0$. For $d=2$ we must restrict $\kappa^2$ to $[0,1/2d)$, by the resons given above.
\end{remark}
The proof of this lemma, together with the result on the finiteness of the free energy \eqref{FE}, will be given in Section~\ref{sec3}. We next state a result on the non-vanishing of the truncated two-point function $G_\Lambda(x,y)$ and on the characterization of the critical line in the triangular domain described above.

\begin{remark}
Since the underlying Gaussian measure~\eqref{measure} is centered at zero, it is invariant under $\phi \mapsto -\phi$, and therefore all correlations involving an odd number of fields do vanish. In particular, the one-point function is identically zero. Hence the truncated two-point function coincides with the non-truncated one throughout our analysis.
\end{remark}
\begin{theorem} \label{thm. 2-pt. funct.}
Let $\kappa^{2},\alpha>0$, and let $Z_\Lambda(\kappa^{2},\alpha)$ be the partition function of \eqref{hybrid-partition}. Then, the (truncated) two–point function $G_\Lambda(x,y)$, evaluated up to order $\mathcal{O}(\kappa^{4})$ and $\mathcal{O}(\alpha^{2})$, is non–vanishing. Its Fourier transform at zero momentum, and hence the convolution inverse of $G_\Lambda(x,y)$, yields the  critical line (massless case)
\bea \label{critical}
\kappa^{2}=\frac{1}{2d}+\alpha^{2}-\alpha .
\eea
\end{theorem}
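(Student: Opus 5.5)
We propose to compute $G_\Lambda(x,y)$ perturbatively by expanding the Boltzmann factor \eqref{action} jointly in $\kappa^2$ and $\alpha$. Writing $B_\Lambda(\phi)=\prod_b e^{\kappa^2X_b}(1+\alpha X_b)$ with $X_b$ as in \eqref{def. Xb}, we expand each bond factor as $1+(\kappa^2+\alpha)X_b+(\tfrac12\kappa^4+\kappa^2\alpha)X_b^2+\ldots$. Note that the $\alpha$-part is exactly linear in $X_b$, so no $\alpha^2X_b^2$ term arises on a single bond; this is the only place where the hybrid model differs from the free model with hopping $\kappa^2+\alpha$. For fixed finite $\Lambda$ and small parameters \eqref{small domain}, all integrals are Gaussian moments with unit covariance, so the expansion converges and the term-by-term computation is legitimate.

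The key steps are as follows.

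\textbf{Step 1 (Wick contractions).} We compute numerator and denominator of \eqref{2n-pt.funct.} using Wick's theorem for the product measure \eqref{measure}. Each site contributes $\langle\phi^{2k}\rangle=(2k-1)!!$. The partition function then receives vacuum contributions from doubled bonds and closed loops, and these cancel against the disconnected parts of the numerator (the linked-cluster argument).

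\textbf{Step 2 (low-order coefficients).} This gives $G_\Lambda(x,x)=1+O(\kappa^4,\alpha^2,\kappa^2\alpha)$ and $G_\Lambda(x,x\pm ae^\mu)=(\kappa^2+\alpha)+O(\cdot)$. In general, $G_\Lambda(x,y)$ equals the sum over lattice walks from $x$ to $y$ weighted by $(\kappa^2+\alpha)^{\text{length}}$, corrected by the local self-interaction terms. Keeping all terms through second order, we will collect in particular:
\begin{enumerate}
\item the correction to the coincident-point value coming from $\langle\phi^4\rangle=3$ on a bond traversed twice;
\item the deficit $-\alpha^2$ per doubly traversed bond, relative to the free model, because $(1+\alpha X_b)$ lacks the $\tfrac12\alpha^2X_b^2$ term.
\end{enumerate}
Non-vanishing of $G_\Lambda$ is then immediate: $G_\Lambda(x,x)=1+O(\cdot)>0$, and the nearest-neighbour value is positive at leading order. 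One can also invoke Griffiths I, since all couplings are ferromagnetic.

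\textbf{Step 3 (zero-momentum Fourier transform).} We write $\hat G(p)$ in the resummed form $\hat G(p)^{-1}=A-B\sum_\mu\cos(p_\mu a)$, reading off $A$ (the local, on-site part) and $B$ (the effective hopping) to second order from Step 2. The convolution inverse $\Gamma=G^{-1}$ is then a local-plus-nearest-neighbour operator up to the retained order. Masslessness means $\hat\Gamma(0)=0$, i.e. $A=2dB$. Inserting the second-order coefficients, we expect the first-order relation $\kappa^2+\alpha=1/2d$ to be corrected by the $\alpha^2$ deficit term, giving \eqref{critical}, $\kappa^2=\tfrac1{2d}-\alpha+\alpha^2$.

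The main obstacle is Step 3: consistently identifying the self-energy at second order. Two things must be checked. First, the $\kappa^2\alpha$ and $\kappa^4$ cross terms must cancel or be absorbed into the normalization, so that only $\alpha^2$ survives in the relation. Second, the critical condition must be derived from the resummed inverse, not from the truncated series, since the truncated series has no pole. I would first carry out the computation on a single bond and single site (an effective one-site renormalization of the Gaussian variance $1\to 1+c\alpha^2$), and then impose $\hat\Gamma(0)=0$, checking against the exact free case $\alpha\to$ exponential, where the critical line is $\kappa^2+\alpha=1/2d$.
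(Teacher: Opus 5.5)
Your plan is essentially the paper's proof: a second-order Wick expansion in which the only non-free contribution is your deficit of $-\alpha^2$ per bond incident to $x$, i.e.\ the paper's $G_4=-2d\alpha^2\delta_{xy}$, while the free second-order piece $\propto(\kappa^2+\alpha)^2$ (the paper's $G_3$, which contains your $\kappa^4$ and $\kappa^2\alpha$ cross terms) cancels against the square of the first-order term on inversion, so $\tilde\Gamma(0)=1-2d(\kappa^2+\alpha)+2d\alpha^2$ and the step you leave as ``we expect'' does close to \eqref{critical}. The only slip is a factor of $2$ in Step~3: with $B=\kappa^2+\alpha$ the effective hopping, the resummed form is $A-2B\sum_\mu\cos(p_\mu a)$ with $A=1+2d\alpha^2$, which is what makes $A=2dB$ the correct masslessness condition.
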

\begin{remark}
Using the Payley-Wiener theorem in the spectral representation for the two-point correlation $G_\Lambda(x,y)$, evaluated up to order $\mathcal{O}(\kappa^{4})$ and $\mathcal{O}(\alpha^{2})$, is non–vanishing. Its Four, the above theorem guarantees that, in the small parameter $\kappa^2$ and $\alpha$ domain we assume in this paper, that we are away from the critical line, the model has a nonzero mass gap, and the two-point correlation $G_\Lambda(x,y)$ exhibits an exponential distance decay at non coincident points. This is a key point to show convergence of, for example, perturbation-like expansions, as polymer expansions (see e.g. \cite{Glimm:1987ng}).
\end{remark}

We now establish our finiteness results for the thermodynamic limit of correlations of the hybrid model.
\begin{theorem} \label{thm. correl. funct.}
	Let $a\in(0,1]$ and $\kappa^{2},\alpha>0$ lie away from the critical line \eqref{critical}, as is the case of the small parameter domain. Then the hybrid model correlations exist and are finite in the thermodynamic limit $\Lambda\nearrow a\mathbb Z^d$. Furthermore, correlations exhibit exponential tree decay. 
\end{theorem}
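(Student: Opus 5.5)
The plan is a cluster (polymer) expansion around the ultralocal Gaussian measure \eqref{measure}, treating every bond as a small perturbation. We write the Boltzmann factor \eqref{action} as $B_\Lambda(\phi)=\prod_b w_b(\phi)$, with
$$w_b=e^{\kappa^2X_b}(1+\alpha X_b)=1+f_b,\qquad f_b\doteq e^{\kappa^2X_b}(1+\alpha X_b)-1.$$
We then expand $\prod_b(1+f_b)$ over subsets of bonds. Because $d\mu$ is a product measure, the integral factorizes over connected components of the chosen bond sets. This gives $Z_\Lambda=\sum_{\{\gamma_i\}}\prod_i \rho(\gamma_i)$, a sum over families of mutually disjoint (vertex-disjoint) connected polymers $\gamma$, with activities $\rho(\gamma)=\int d\mu\prod_{b\in\gamma}f_b$.

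The key estimate is $|\rho(\gamma)|\le (C\epsilon)^{|\gamma|}$, where $\epsilon=\kappa^2+\alpha$ and $C$ is independent of $\Lambda$ and $a$. To get it, we use $|f_b|\le(\kappa^2+\alpha)|X_b|\,e^{\kappa^2|X_b|}(1+\alpha|X_b|)$ and then apply Hölder/Cauchy–Schwarz at each site. A site $x$ belongs to at most $2d$ bonds of $\gamma$, so the local factor is bounded by
$$\int d\mu(\phi_x)\,(1+|\phi_x|)^{2d}\,e^{2d\kappa^2\phi_x^2},$$
using $|X_b|\le\frac12(\phi_x^2+\phi_{x^+_\mu}^2)$. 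This integral is finite and uniform in $a$ as long as $4d\kappa^2<1$, which holds in the small parameter domain \eqref{small domain} and is consistent with \eqref{1/2d}. Uniformity in $a$ is automatic: the scaled action contains no explicit $a$, which is precisely the content of the scaling \eqref{scaling}.

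With this activity bound, the standard Kotecký–Preiss (or Glimm–Jaffe–Spencer) convergence criterion applies for $\kappa_0,\alpha_0$ small. It yields $\ln Z_\Lambda=\sum$ of connected clusters, convergent uniformly in $\Lambda$, which is consistent with Lemma \ref{lemma_TUV}.

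For the correlations $G_\Lambda(x_1,\dots,x_{2n})$ we insert the external fields $\phi_{x_i}$ and repeat the expansion. The numerator polymers that contain external points acquire modified activities $\rho'(\gamma)$, bounded in the same way with extra polynomial moments at the sites $x_i$. Dividing by $Z_\Lambda$ cancels every polymer not touching the $x_i$ (the standard ratio/cluster argument), leaving a convergent sum over polymer families in which each connected component must contain an even number of external points; otherwise the $\phi\to-\phi$ symmetry kills it. Each such family has total bond count at least the minimal tree length $T(x_1,\dots,x_{2n})$ of a Steiner tree connecting the points within its components. This gives
$$|G_\Lambda(x_1,\dots,x_{2n})|\le C_n\,(C\epsilon)^{T/a}\le C_n\,e^{-mT}$$
with $m=a^{-1}|\ln(C\epsilon)|>0$, which is the exponential tree decay. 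For truncated correlations we restrict to connected families, giving decay in the tree connecting all the points.

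Existence of the thermodynamic limit follows because each cluster coefficient is a local, translation-invariant quantity that is independent of $\Lambda$ once $\Lambda$ contains it. Periodic boundary effects are exponentially small in $\mathrm{dist}(x_i,\partial\Lambda)$, so the sums converge absolutely and uniformly, and $G_\Lambda\to G$ as $\Lambda\nearrow a\mathbb{Z}^d$ by dominated convergence.

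The main obstacle is the activity bound: showing that the combination of polynomial weight and exponential weight at sites of high bond multiplicity stays integrable, with constants small enough to beat the polymer entropy $\le(2d)^{2|\gamma|}$. The first attempt would be a generalized Hölder inequality applied per site, as above. If the constants are too poor, we would use a Mayer-type expansion of only the $\alpha$-part together with the Gaussian covariance $(\Delta+r)^{-1}$ from Lemma \ref{lemma_TUV}, which has exponential decay since $r>0$ away from the critical line \eqref{critical}.
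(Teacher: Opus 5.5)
Your proposal is correct and takes essentially the paper's route. The paper's first-order interpolation $\prod_b\,[I_b+J_b]$ applied to the bond factors reproduces exactly your expansion $\prod_b(1+f_b)$ into vertex-disjoint connected polymers around the ultralocal Gaussian measure, which is followed by a Koteck\'y--Preiss criterion and connected-polymer arguments for the decay; your explicit site-factorized activity bound $|\rho(\gamma)|\le\bigl(C(\kappa^2+\alpha)\bigr)^{|\gamma|}$, valid for small $\kappa^2$ and $\alpha$, supplies the step the paper explicitly leaves unaddressed when it assumes \eqref{K-P} uniformly in $\{s\}$.
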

\ch {\bf NEW the infinite volume is needed below, otherwise your computation is not correct and periodic boundary condition must be taken into account.}

\begin{remark}
From now on we assume to be in the thermodynamic limit and denote the infinite-volume correlations without the index $\Lambda$.
\end{remark}

Once the results on the two-point functions are established, we obtain the following non-vanishing result for the truncated four-point correlation of the model.

\ch {\bf why not kappasquare and alpha in the small parameter domain????? in principle, it can be chosen smaller than 2.15. but i see no reason for 2.15 to be satisfied. I prefer the small parameter domain since it gives the chance for a convergent Taylor expansion in kappasquare  and alpha}

\ch {\bf more precision: the 4 and 2 point fct are in the infinite volume here!}

\begin{theorem} \label{thm. 4-pt. funct.}
Let $\kappa^{2},\,\alpha>0>0$ be \ch in the small parameter domain
and let $Z_\Lambda(\kappa^{2},\alpha)$ be the partition function defined in \eqref{hybrid-partition}. 
Define the truncated (connected) four--point function by
\bea \label{tr-4pt}
G^{\mathrm{T}}(x_1,x_2,x_3,x_4)\doteq G(x_1,x_2,x_3,x_4)
-\sum_{\text{pairings}}
G(x_i,x_j) G_\Lambda(x_k,x_l)\,,
\eea
where the sum runs over all pairwise contractions of $\{x_1,x_2,x_3,x_4\}$.  
Then \ch at least for the special configuration of points where $x_1\not= x_2\not= x_3 \not= x_4$, it follows that $G^{\mathrm{T}}(x_1,x_2,x_3,x_4)$ has a nonzero contribution at up to the joint perturbation order 
$\mathcal{O}(\kappa^{4})$ and $\mathcal{O}(\alpha^{2})$.
\end{theorem}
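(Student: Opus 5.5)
Our plan is to compute $G^{\mathrm{T}}$ directly as a joint power series in $\kappa^2$ and $\alpha$, writing the Boltzmann factor \eqref{action} as a product over bonds $b$ of $e^{\kappa^2 X_b}(1+\alpha X_b)$ with $X_b$ as in \eqref{def. Xb}. Because the single-site measure \eqref{measure} is a product of standard Gaussians, each term of the expansion reduces to products of one-site moments $\int d\mu\,\phi_x^{n}=(n-1)!!$ for even $n$ and $0$ for odd $n$. By Theorem \ref{thm. correl. funct.}, this series converges for fixed $a$ in the small parameter domain \eqref{small domain}, the infinite-volume correlations exist, and they are analytic in $(\kappa^2,\alpha)$. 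Truncation at low order is therefore legitimate: $G^{\mathrm{T}}$ is an analytic function of $(\kappa^2,\alpha)$, and it is nonzero as soon as a single Taylor coefficient is nonzero. We may also take the points nearest neighbours, or at small separation, which fixes the order at which the first contribution appears.

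First, we note that in the ratio $Z^{-1}\int\cdots$ and in the subtraction of pairings, all vacuum bubbles and all contributions that factor into disconnected two-point pieces cancel. This is the standard linked-cluster property of the truncated four-point function, which we obtain by writing $G^{\mathrm{T}}$ as the fourth joint cumulant of $\phi_{x_1},\dots,\phi_{x_4}$. Consequently only connected bond configurations touching all four points survive. The $\kappa^2$ part alone is Gaussian (a pure quadratic perturbation of a Gaussian), so to every order in $\kappa^2$ with $\alpha=0$ the fourth cumulant vanishes. Nontriviality must therefore come from $\alpha$ terms, and more precisely from the non-exponential structure of $1+\alpha X_b$ compared with $e^{\alpha X_b}$. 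Expanding $(1+\alpha X_b)e^{\kappa^2X_b}=e^{(\kappa^2+\alpha)X_b}\bigl(1-\tfrac{\alpha^2}{2}X_b^2+O(\alpha^3)\bigr)$ isolates the genuinely non-Gaussian vertex $-\tfrac{\alpha^2}{2}\sum_b X_b^2=-\tfrac{\alpha^2}{2}\sum_b\phi_x^2\phi_{x^+_\mu}^2$, a quartic interaction.

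The computation then goes as follows. We treat the model as the scaled free field with hopping parameter $\kappa^2+\alpha$, whose covariance is $(2d)^{-1}$-normalized $(\Delta+r)^{-1}$ with $r$ as in \eqref{var. r} (cf. Lemma \ref{lemma_TUV}). The leading contribution to $G^{\mathrm{T}}$ is the first-order term in the vertex,
\[
G^{\mathrm{T}}(x_1,\dots,x_4)\simeq -\tfrac{\alpha^2}{2}\sum_{b=(z,z')}\,\sum_{\text{pairings}} C(x_{i},z)C(x_{j},z)C(x_k,z')C(x_l,z'),
\]
with a combinatorial factor of $4$ from the ways to attach the external legs. Expanding $C$ to lowest order in $\kappa^2+\alpha$ gives $C(x,y)=\delta_{xy}+(\kappa^2+\alpha)\mathbf 1_{|x-y|=a}+\dots$. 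For a concrete configuration, say $x_1=x_2=z$ and $x_3=x_4=z'$ adjacent, or four distinct points forming a short path, this yields an explicit monomial such as $-2\alpha^2$ or $-c\,\alpha^2(\kappa^2+\alpha)^k$ with $c>0$ an explicit lattice count. It is nonzero, which proves the claim at the stated order $\mathcal O(\alpha^2)$, with $\kappa^2$ corrections up to $\mathcal O(\kappa^4)$ entering only through the propagators.

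The main obstacle is the combinatorial bookkeeping. We must check that no other term of the same total order cancels the chosen coefficient: contributions from pairing subtractions, from corrections to $Z$, and from the $\alpha^3$ terms of the expansion. We would handle this by fixing a specific configuration of distinct points, for instance four consecutive sites on a line, so that the minimal order is determined by a lattice-path count. We would then enumerate all bond multisets of that order linking the four points and verify, using the $\alpha=0$ Gaussianity, that only the $X_b^2$ vertex contributes at leading order. Its sign is definite (negative), so no cancellation is possible.
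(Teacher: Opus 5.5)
Your route differs from the paper's. The paper expands in the unit Gaussian measure to total second order in $(\kappa^2,\alpha)$ and subtracts products of two-point functions term by term. You instead absorb the full hopping $\kappa^2+\alpha$ into the covariance $C=B^{-1}$, with $B$ as in \eqref{op. B}, and isolate the single non-Gaussian vertex $-\tfrac{\alpha^2}{2}\sum_b X_b^2$. (The covariance is not a $(2d)^{-1}$-normalized resolvent, but your expansion $C(x,y)=\delta_{xy}+(\kappa^2+\alpha)\mathbf 1_{|x-y|=a}+\dots$ is the right one.) Your approach makes two things transparent: $G^{\mathrm T}\equiv0$ at $\alpha=0$, and at first order in the vertex only the $4!$ fully connected attachments of the external legs to $\phi_z^2\phi_{z'}^2$ contribute.

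As written, however, the argument does not reach the theorem's hypothesis. The explicit value $-2\alpha^2$ you give is for $x_1=x_2=z$, $x_3=x_4=z'$, which is excluded. For distinct points you leave $k$ open, and that is exactly where the content is. With zeroth-order propagators the vertex term is supported on configurations whose point multiset is $\{z,z,z',z'\}$. For four distinct points at most one leg can be trivial at each endpoint of $b$, so $k\ge2$, and $G^{\mathrm T}$ vanishes identically at total order two (indeed three) in $(\kappa^2,\alpha)$. So your conclusion ``it is nonzero, which proves the claim at the stated order'' is not established for distinct points. Under the total-degree-two truncation the paper uses, it is false. Under the reading with separate degree bounds ($\kappa$-degree at most four, $\alpha$-degree two), it hinges on showing that $k=2$ is attained with a nonzero coefficient, which you have not done.

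What you need to add is the $k=2$ computation, and the configuration you mention at the end is the right one. Take $x_j=x_1+(j-1)a\,e^\mu$, $j=1,\dots,4$. Only $b=(x_2,x_3)$ contributes at this order, through the four attachments $x_1,x_2\to x_2$ and $x_3,x_4\to x_3$, giving $G^{\mathrm T}=-2\alpha^2(\kappa^2+\alpha)^2+\dots$. Hence the coefficient of the monomial $\alpha^2\kappa^4$ is $-2$. At that monomial only the first-order $X_b^2$ vertex with $\alpha=0$ propagators can enter, since the $\alpha^3$ vertex and second-order terms carry too many powers of $\alpha$. The propagator coefficients count nearest-neighbour walks, so your no-cancellation argument then goes through. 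State explicitly that this proves the theorem with ``$\mathcal O(\kappa^4)$ and $\mathcal O(\alpha^2)$'' read as separate degree bounds.

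Do not try to match the paper's conclusion at total order two. Its distinct-point term $2(\kappa^2+\alpha)^2\sum_{\mu,\nu}\delta_{x_1(x_2)^-_\mu}\delta_{x_3(x_4)^-_\nu}+\dots$ survives at $\alpha=0$, contradicting your (correct) Gaussianity observation. It has exactly the form of the $G_2(x_1,x_2)G_2(x_3,x_4)$ cross term, which is absent from \eqref{prod. 2-pt}. Finally, you do not need the analyticity you borrow from Theorem~\ref{thm. correl. funct.}, whose proof is only sketched: a fixed-order coefficient is a finite local sum.
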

\ch {\bf !!!! NEW NEW}
\begin{remark}
We note that both the hopping term and the $\alpha$ interaction in our hybrid model Boltzmann factor are quadratic in the fields. Since, for fixed lattice $\Lambda$ and $a\in(0,1]$ we have 
$$
\Big| \dis\int\,d\mu(\phi)\:  [\phi(x)]^{2n}\,\Big| \;\leq\; C^n\,n! \,,   
$$
for some positive constant $C>0$ depending eventually on the lattice spacing $a\in(0,1]$. Thus, a joint perturbation (Taylor) series in $\kappa^2$ and $\alpha$ about the origin converges for small enough values of the couplings $\kappa^2$ and $\alpha$.
\end{remark}

\begin{remark}
The fact that $G^{\mathrm{T}}(x_1,x_2,x_3,x_4)\,\not=\,0$, at least for distinct points, accounts already for the nontriviality of the hybrid model in the small parameter domain. Above, we adopt the names pairing and contraction according to the standard Wick's theorem in quantum field theory.
\end{remark}

Taking into consideration the results of the above theorems ensure the hybrid model is a nontrivial quantum field theory model. 

\section{Proofs of the Main Results and Their Implications}\label{sec3}
\ch {\bf we need to have r>0. otherwise the operator B is not strictly positive! Due to this I am not sure anymore if Lemma 2 works for $\kappa^2$ and $\alpha$ not in the small parameter domeina}

\subsection{TUV Bounds and Existence of the Free Energy}

\begin{proof}[Proof of Lemma \ref{lemma_TUV}]
\emph{Upper bound.}
Since $1+u\le e^{u}$ for every $u\in\mathbb{R}$, we immediately obtain
\bea \label{exp_UB}
\prod_{b}\big(1+\alpha X_b\big)\leq\exp\!\Big(\alpha X_b\Big).
\eea
Therefore, using Griffiths I inequality \cite{Glimm:1987ng}   \ch {\bf I do not understand how the quadratic term in the rhs arises since you kept the gaussian measure intact. probably must replace $d\mu$ by $d\phi$ in last equality}
\bea\label{Z-ineq-1}
Z_{\Lambda}
&\le \dis\int d\mu(\phi)\,\exp\!\Big((\kappa^{2}+\alpha)\sum_{b}X_{b}\Big)
= \int d\mu(\phi)\,\exp\!\Big((\kappa^{2}+\alpha)\sum_{b}X_{b}-\tfrac{1}{2}\sum_{x}\phi_x^2\Big).
\eea
Using the periodic boundary conditions one rewrites the quadratic form as
\bea \label{op. B}
(\kappa^{2}+\alpha)\sum_{b}X_{b}-\tfrac{1}{2}\sum_{x}\phi_x^2
=-\tfrac{1}{2}(\phi,B\phi),
\qquad B=(\kappa^{2}+\alpha)(\Delta+r),
\eea
with $r>0$ as in \eqref{var. r}. We recall that $\Delta$ stands for minus the Laplacian, Hence the integral in \eqref{Z-ineq-1} is Gaussian and equals
\bea
Z_{\Lambda}\le \frac{1}{\det^{1/2}B}= \exp\!\Big\{\tfrac{1}{2}\sum_{j}\ln\lambda_j^{-1}\Big\},
\eea
\ch where $\{\lambda_j\}_{j\geq 1}>0$ since $B>0$. Using the elementary inequality $\ln u < u$, for all $u>0$, with $u = \lambda_j^{-1}$, we obtain  
\bea \label{trace formula}
\sum_j \ln \lambda_j^{-1} \le \sum_j \lambda_j^{-1} 
= \mathrm{Tr}\,B^{-1}
= \frac{\Lambda_{s}}{2\left(\kappa^{2}+\alpha\right)}\left(\Delta+r\right)^{-1}(0,0).
\eea
This yields the stated upper bound in \eqref{Z-bounds}. For $r>0$ (see \eqref{var. r} and \eqref{1/2d}), the operator $(\Delta + r)$ in \eqref{trace formula} admits a bounded inverse, i.e., its resolvent at coincident points is well-defined (see, e.g., \cite{daVeiga:2022twu}). While $Z_\Lambda$ remains finite for $r>0$ and $\kappa^{2} + \alpha < 1/(2d)$, in the limit $r \to 0$ (i.e., $\kappa^{2} + \alpha = 1/2d$) it is necessary to require $d \geq 3$ (cf. infrared divergences \footnote{See, for example, \cite{Derezinski2005} for a construction of the Hilbert space and the algebra of massless scalar field operators in $d=2$ dimensions, in the continuum spacetime. By working with both separable and non-separable Hilbert spaces, the authors obtain a formulation that avoids the usual infrared divergence, but does not satisfy all the Wightman axioms \cite{Streater2000pct,Jost:1965}. In particular, in the separable case, there does not exist a vacuum vector, while in the non-separable case, there are no (smeared) fields; but only the exponentiated (Weyl) operators \cite{Bratteli1997operator}.}).

\emph{Lower bound.}
Let $\mathcal{E}>0$ and restrict the integration to the set $S_{\mathcal{E}}$ defined in \eqref{def. S_E}. On $S_{\mathcal{E}}$ we have $X_b\ge -\mathcal{E}$, for every bond $b$. Hence, the following inequality holds
\bea
\exp\!\Big(\kappa^{2}\sum_{b}X_{b}\Big)\ge \exp(-\kappa^{2}\Lambda_{b}\mathcal{E})\,.
\eea
Moreover, for $\alpha>0$, the polynomial interaction factor $\prod_b(1+\alpha X_b)$ satisfies
\bea
\prod_b(1+\alpha X_b)\ge \prod_b(1-\alpha|X_b|)\ge 1-\alpha\sum_b|X_b|\,.
\eea
The latter inequality follows from repeated use of $(1-u_1)\cdots(1-u_n)\geq 1-(u_1+\cdots+u_n)$, for $u_i\geq 0$, $i=1,\ldots,n$.

Combining these estimates gives
\bea \label{l-bound}
Z_{\Lambda}
&\geq & \int_{S_{\mathcal{E}}} d\mu(\phi)\, \exp\!\Big(\kappa^{2}\sum_b X_b\Big)\prod_b\Big(1+\alpha X_b\Big)\nn\\
&\geq & \exp\!\Big(-\kappa^{2}d\Lambda_{s}\mathcal{E}\Big)\int_{S_{\mathcal{E}}} d\mu(\phi)\,\Big(1-\alpha\sum_b|X_b|\Big)\nn\\
&=&\exp\!\Big(-\kappa^{2}d\Lambda_{s}\mathcal{E}\Big)
,
\eea
which is the lower bound claimed in \eqref{Z-bounds}. (Needless to say, but for completeness, note that the integral in the last equality simplifies to 1. This is because the integral over the terms involving the bond vanishes, since $X_b$ (see \eqref{def. Xb}) contains only nearest-neighbor points, while for the rest the normalized integral measure gives 1.)
\end{proof}
\begin{remark}
In the proof of Lemma~\ref{lemma_TUV}, while establishing the lower bound \eqref{l-bound}, one could in principle improve it and find a better bound as follows. Since $d\mu(\phi)$ \eqref{measure} is the \emph{product of mean zero Gaussian measures}, the lower bound admits a more concrete combinatorial interpretation that automatically produces $Z_\Lambda\ge1$ without any additional condition. Expanding both the exponential and the nearest neighbor (nn) product in
power series,
\bea
\exp\:\Bigl(\kappa^{2}\:\sum_{b}X_b\Bigr)
   =\sum_{n\geq 0}\frac{\kappa^{2n}}{n!}
     \Biggl(\sum_{b}X_b\Biggr)^n,
\eea
and
\bea
\prod_{b}\bigl(1+\alpha\,X_b\bigr)
  =\sum_{S\subset\mathrm{nn\,pairs}}
     \alpha^{|S|}
     \prod_{b\in S}X_b,
\eea
cf. \eqref{prod. aX_b}, and integrating term by term (since all Gaussian moments are finite, for finite $\Lambda$ and fixed spacing $a\in(0,1]$), one obtains
\bea \label{Z-expansion}
Z_\Lambda
   =\sum_{n\geq 0}\;\sum_{S}
     \frac{\kappa^{2n}\alpha^{|S|}}{n!}
     \int d\mu(\phi)\;
       \Biggl(\sum_{b}X_b\Biggr)^n
       \prod_{b\in S}X_b.
\eea
The integral of any monomial vanishes
unless every site appears with an even power in the $\phi$' s, all of
which are nonnegative. And the term with $n=0$ and $S=\{\varnothing\}$ equals $1$. As a result,
\bea
Z_\Lambda = 1 + \sum\text{(non-negative contributions)} \geq 1 .
\eea
\end{remark}
\begin{corollary}\label{cor._FE}
Let $Z_{\Lambda}(\kappa^{2},\alpha)$ denote the partition function on the finite lattice $\Lambda$, with $\Lambda_{s}$ sites, and assume that there exist constants 
$C_{\ell},C_{u}\in\mathbb{R}$, independent of $\Lambda_{s}$ and of the lattice spacing $a\in(0,1]$, such that the TUV stability bounds of \eqref{Z-bounds} hold, for all $\Lambda$ and $a$. As in \eqref{FE}, consider the normalized free energy on $\Lambda$
$$
F_{\Lambda}(\kappa^{2},\alpha)
\;=\;
\frac{1}{\Lambda_{s}}\ln Z_{\Lambda}(\kappa^{2},\alpha)\,.
$$
Then, the following statements hold:
\begin{enumerate}
\item[\textnormal{(i)}] For any fixed finite hypercubic lattice $\Lambda$ and all $a\in (0,1]$, we have
\begin{equation}
\label{eq:F-bounds}
C_{\ell}\;\le\;F_{\Lambda}(\kappa^{2},\alpha)\;\le\;C_{u}.
\end{equation}
\item[\textnormal{(ii)}] Let $(\Lambda^{(n)}_s)_{n\in\mathbb N}$ and $(a_n)_{n\in\mathbb N}$ be sequences such that
\bea
\Lambda_{s}^{(n)}\nearrow\infty,
\qquad
a_{n}\searrow 0.
\eea
Then the sequence $\{F_{\Lambda^{(n)}}(\kappa^{2},\alpha)\}_{n\in\mathbb{N}}$ admits a 
convergent subsequence, and every subsequential limit lies in $[C_{\ell},C_{u}]$. 
In particular, the set of infinite volume, continuum limit points of the 
normalized free energy is nonempty and contained in this compact interval.
\end{enumerate}
 \end{corollary}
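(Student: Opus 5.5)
Part (i) follows directly from the TUV bounds \eqref{Z-bounds} by taking logarithms. Since $\ln$ is strictly increasing on $(0,\infty)$ and $Z_\Lambda(\kappa^2,\alpha)>0$ (by the remark following Lemma~\ref{lemma_TUV}, in fact $Z_\Lambda\ge 1$), applying $\ln$ to $e^{C_\ell\Lambda_s}\le Z_\Lambda\le e^{C_u\Lambda_s}$ gives $C_\ell\Lambda_s\le \ln Z_\Lambda\le C_u\Lambda_s$. Dividing by $\Lambda_s>0$ yields \eqref{eq:F-bounds}. The only point to stress is that the hypothesis makes $C_\ell$ and $C_u$ independent of both $\Lambda_s$ and $a\in(0,1]$. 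Hence the same interval $[C_\ell,C_u]$ works for every finite lattice and every spacing.

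For part (ii), let $F_n\doteq F_{\Lambda^{(n)}}(\kappa^2,\alpha)$, computed on the lattice $\Lambda^{(n)}\subset a_n\mathbb Z^d$. By (i), applied with $\Lambda=\Lambda^{(n)}$ and $a=a_n$ (allowed precisely because of the uniformity of the constants), we have $F_n\in[C_\ell,C_u]$ for all $n$. So $(F_n)$ is a bounded real sequence, and the Bolzano--Weierstrass theorem (Rudin, Thms.~2.41--2.42) provides a convergent subsequence $F_{n_k}\to F^\ast$. Because $[C_\ell,C_u]$ is closed, every subsequential limit lies in it. The set of such limit points is therefore nonempty and contained in this compact interval.

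The main obstacle is not analytic. It is verifying that the constants produced by Lemma~\ref{lemma_TUV} really satisfy the uniformity hypothesis. $C_\ell=-d\kappa^2\mathcal E$ is manifestly independent of $\Lambda_s$ and $a$. $C_u$, however, involves $(\Delta+r)^{-1}(0,0)$ on the periodic lattice, and this must be bounded uniformly in the volume and the spacing. I would handle it through the Fourier representation
\[
(\Delta+r)^{-1}(0,0)=\frac{1}{\Lambda_s}\sum_{k}\Big(r+\textstyle\sum_\mu 2(1-\cos k_\mu)\Big)^{-1}\le r^{-1},
\]
which holds in dimensionless (scaled) lattice units, since the scaled hopping form is written per bond. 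Here $r>0$ is guaranteed by \eqref{1/2d}. This gives $C_u\le [2(\kappa^2+\alpha)r]^{-1}$, uniform in $\Lambda_s$ and $a$, so the corollary applies to the hybrid model.
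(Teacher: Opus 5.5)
Your argument is correct and is essentially the paper's own proof: take logarithms in \eqref{Z-bounds} and divide by $\Lambda_s$ for (i), then apply Bolzano--Weierstrass to a sequence confined to the compact interval $[C_\ell,C_u]$ for (ii). Your closing paragraph is not needed for the corollary, since uniformity of the constants is a hypothesis there. Still, your bound $(\Delta+r)^{-1}(0,0)\le r^{-1}$ is correct, and it usefully replaces the paper's finite-volume $C_u$ by the manifestly uniform constant $[2(1-2d(\kappa^2+\alpha))]^{-1}$.
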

\begin{proof}
Taking the logarithm in \eqref{Z-bounds} gives
\bea
C_{\ell}\Lambda_{s}
\;\le\;
\ln Z_{\Lambda}(\kappa^{2},\alpha)
\;\le\;
C_{u}\Lambda_{s},
\eea
and division by $\Lambda_{s}>0$ yields \eqref{eq:F-bounds}.
Thus $\{F_{\Lambda}(\kappa^{2},\alpha)\}$ is uniformly bounded for all finite 
$\Lambda$ and all $a>0$. Next, consider any sequence $(\Lambda_{s}^{(n)},a_{n})$ with 
$\Lambda_{s}^{(n)}\!\nearrow\!\infty$ and, subsequently, $a_{n}\!\searrow\!0$ Set 
$F_{n}\doteq F_{\Lambda^{(n)}}(\kappa^{2},\alpha)$.  
By \eqref{eq:F-bounds}, the sequence $\{F_{n}\}$ lies in the closed and bounded 
interval $[C_{\ell},C_{u}]$, which is compact.  
Thus, applying the Bolzano--Weierstrass theorem (see e.g. theorems 2.41 and 2.42 in \cite{Rudin1976principles}), $\{F_{n}\}$ therefore has a convergent 
subsequence with limit in $[C_{\ell},C_{u}]$, proving the claim and resulting in the existence of a finite free normalizeds free-energy in the thermodynamic and continuum limits, at least in the subsequential sense.
\end{proof}
\subsection{The Two-Point Correlation and the Critical Line}

\begin{proof} [Proof of Theorem \ref{thm. 2-pt. funct.}]
Expanding the bond product in \eqref{hybrid-partition} in powers of $\alpha$ gives
\bea \label{prod. aX_b}
\prod_{b}\big(1+\alpha X_b\big)=1+\alpha\,\sum_{b}X_b+\alpha^{2}\sum_{b< b^\prime }X_b X_{b^\prime}+\mathcal{O}(\alpha^3)\,,
\eea
where the notation $b<b'$ indicates that the bond $b$ is not counted twice in the pairing of fields. Equivalently, as the sum runs over spacetime points, only distinct unorded pairs are included.
Consequently, keeping only terms up to $\mathcal{O}(\kappa^{4})$ and $\mathcal{O}(\alpha^{2})$, the partition function becomes
\bea
Z(\kappa^{2},\alpha)=\int d\mu(\phi)\,\Bigg(1+\left(\kappa^{2}+\alpha\right)\sum_{b}X_b-\frac{\alpha^{2}}{2}\sum_{b}X_b^2+\frac{1}{2}\left(\kappa^{2}+\alpha\right)^{2}\sum_{b,\,b^\prime }X_b X_{b^\prime}\Bigg).
\eea
By Wick’s theorem \cite{Gattringer:2010zz,Montvay:1994cy} one immediately has
\bea \label{zero int.}
\int d\mu(\phi)\,X_b=0
\eea
pointwise in $b$. Therefore, expanding $(1/Z)$ as a geometric series and using \eqref{zero int.} to discard the  contributions \ch with only one field $\phi$, we obtain
\bea
\frac{1}{Z(\kappa^2,\alpha)}= 1\,-\,\frac{1}{2}\left(\kappa^{2}+\alpha\right)^{2}\sum_{b,\,b^\prime}\int d\mu(\phi)\,X_b X_b^\prime\,+\,\frac{\alpha^{2}}{2}\sum_{b}\int d\mu(\phi)\,X_b^2\,,
\eea 
In particular, the would–be terms proportional to $\left[-\tfrac12(\kappa^{2}+\alpha)^{2}\right]$ and $\left[(\kappa^{2}+\alpha)^{2}\right]$ that involve a single factor of $X_b$ vanish because of \eqref{zero int.}.

Next, the two–point function, up to the same truncation order in $\kappa^2$ and $\alpha$ as above, can be written as
\bea \label{2pt-G}
G_\Lambda(x,y)=\sum_{i=1}^{4} G_i(x,y),
\eea
where, again by Wick’s theorem,
\bea \label{2-pt. G1}
G_{1}(x,y)=\int d\mu(\phi)\,\,\phi_{x}\phi_{y}=\delta_{xy},
\eea
\bea \label{2-pt. G2}
G_{2}(x,y)=\left(\kappa^{2}+\alpha\right)\sum_{b}\int d\mu(\phi)\,\phi_{x}\phi_{y}X_b=\big(\kappa^{2}+\alpha\big)\sum_{\mu}\Big(\delta_{yx_{\mu}^{+}}+\delta_{yx_{\mu}^{-}}\Big),
\eea
\bea \label{2-pt. G3}
G_{3}(x,y)&=&\frac{1}{2}\left(\kappa^{2}+\alpha\right)^{2}\sum_{b,\,b^\prime}\Bigg\{ \int d\mu(\phi)\,\phi_{x}\phi_{y}X_b X_{b^\prime}-\int d\mu(\phi)\,\phi_{x}\phi_{y}\int d\mu(\phi)\,X_b X_{b^\prime}\Bigg\} \nn\\&=&
\frac{1}{2}\big(\kappa^{2}+\alpha\big)^{2}\sum_{z,\mu}\sum_{z^{\prime},\nu}\Big(\delta_{x[z}\delta_{y\{z^{\prime}}\delta_{z_{\nu}^{\prime+}\}z_{\mu}^{+}]}+\delta_{x[z^{\prime}}\delta_{y\{z}\delta_{z_{\mu}^{+}\}z_{\nu}^{\prime+}]}\Big),
\eea
where, \ch {\bf be more precise!!!! i do not see square brackets above!} for the indices, $[\cdots]$ and $\{\cdots\}$ indicate that the corresponding expression contains the two terms obtained by swapping the relevant indices. And the remaining term of the two-point function \eqref{2pt-G} with $b=b^\prime$ in \eqref{2-pt. G3} yields
\bea \label{2-pt. G4}
G_{4}(x,y)=-\alpha^{2}\sum_{\mu,z}\left(\delta_{xz}\delta_{yz}+\delta_{xz_{\mu}^{+}}\delta_{yz_{\mu}^{+}}\right)=-2d\alpha^{2}\delta_{xy}
\eea
Let $\tilde G(p)$ denote the lattice Fourier transform \footnote{Generally, what is called \emph{momentum space} is crucial for the Fourier representation. 
In precise mathematical terms, this is described by the \emph{Pontryagin dual} of the underlying abelian group \cite{PontrjaginTopGroups}, which is also the setting of the SNAG theorem (a generalization of Stone's theorem) \cite{Folland1994course} that carefully addresses the joint spectrum of momentum operators in higher dimensions. For a corresponding discussion in a lattice setting, see e.g.\ \cite{Naaijkens:2013vqa}. In the continuum, the Pontryagin dual of $\mathbb{R}^d$ is again $\mathbb{R}^d$. 
However, in discrete settings (relevant for our case, though we will not make this explicit) the Pontryagin dual of $\mathbb{Z}^d$ is the $d$-dimensional torus $\mathbb{T}^d$. This is, for example, used explicitly in \cite{Schor:1983aq}.} of $G_\Lambda(x,y)$ and $\tilde{\Gamma}(p)$ its reciprocal. Specifically (up to an overall normalization),
\bea
\tilde{G}(p)=\sum_x G(x)\,\,e^{-ip\cdot x}.
\eea
Evaluating at $p=0$ gives
\bea
\tilde{G}_1(0)=1,\qquad \tilde{G}_2(0)= \kappa^{2}+\alpha,
\eea
while a direct counting of the Wick contractions yields
\bea
\tilde{G}_3(0)=4d^{2}\left(\kappa^{2}+\alpha\right)^{2},\quad\mathrm{and}\quad\tilde{G}_4(0)=-2d\alpha^{2}.
\eea
Summing these contributions, one finds the convolution inverse at zero momentum
\bea
\tilde{\Gamma}(0)=1-2d\left(\kappa^{2}+\alpha\right)+2d\alpha^{2},
\eea
and setting $\tilde{\Gamma}(0)=0$ yields the critical line, as  stated in \eqref{critical}.
\end{proof}
\begin{proof}[Proof of Theorem \ref{thm. correl. funct.}] The result follows from application of a standard polymer-like expansion \cite{Glimm:1987ng,Simon1993}. For instance, we may consider the following interpolated representation of the partition function (\ref{hybrid-partition}) for the hybrid model. 
\bea\label{Interpolation}
Z_{\Lambda}(\kappa^{2},\alpha;\{s\})
=\int d\mu(\phi)\, \exp\!\Big(\kappa^{2}\sum_{b}s_b \, X_b\Big)
\prod_{b^\prime}\big(1+\alpha\,s_{b^\prime} \, X_{b^\prime}\big)\,.
\eea
In a similar way, we also define interpolated unnormalized correlations and, by dividing by the interpolated partition function $Z_{\Lambda}(\kappa^{2},\alpha;\{s\})$, the interpolated normalized correlations. 

Above, in \eqref{Interpolation}, we used a family $\{s\}$ of interpolating parameters $s_{b}\in[0,1]$, associated with each bond $b\in\Lambda$. Applying truncated first order Taylor expansions in each $s_b$, like
$$
Z_{\Lambda}(\kappa^{2},\alpha)\,=\,Z_{\Lambda}(\kappa^{2},\alpha;\{s\}=1)\,=\,\prod_{b\in\Lambda} \left[ I_b\,+\, J_b\right]\:Z_{\Lambda}(\kappa^{2},\alpha;\{s\})\,,
$$
with $\{s\}=1$ meaning any $s_b=1$ and, for a function $f(s_b)$, we have $I_b\,f(s_b)\,=\,f(0)$ and $J_b\,f(s_b)\,=\,\int_0^1\,\left(df/ds_b\right)\!(s_b)\:ds_b$, we obtain an expansion in terms of disjoint polymers. The same procedure is applied to unnormalized correlations such that the expansion is performed both in the numerator and in the denominator of normalized correlations. 

Noting that our Gaussian measure $d\mu(\phi)$ involves only a local covariance, applying standard combinatorial arguments for cluster/polymer expansions lead to a convergent expansion for small enough parameters $\kappa^2$ and $\alpha$, since we are away from the critical line and correlations decay fast enough. Then the claim follows. For completeness we provide a rough sketch yielding exponential decay in appendix \ref{remark_polymer}.
\end{proof}

\subsection{The Non-Vanishing of the Truncated Four-Point Correlation}\label{sec4}
In the proof of the next theorem, we will rely on the explicit expressions for the components of the two-point function derived in the proof of theorem~\ref{thm. 2-pt. funct.}, as these are needed to analyze the truncated contribution to the four-point function.

\begin{proof}[Proof of Theorem \ref{thm. 4-pt. funct.}]
The non-truncated part of the four-point function in \eqref{tr-4pt} is obtained completely analogously to the two-point function in the proof of Theorem~\ref{thm. 2-pt. funct.}. 
For the truncated part, the relevant pairwise products follow directly from \eqref{2-pt. G1}--\eqref{2-pt. G4}. 
Combining these results proves the claim. 
For completeness, we provide the full proof in appendix~\ref{proof 4.1}.
\end{proof}
\section{Conclusions}\label{sec5}
We have introduced a scalar quantum field model on a  hypercubic lattice $\Lambda$ of spacetime dimension $d=4$. Our model obeys O-S positivity with respect to the physical quantum mechanical Hilbert space which we constructed in Section \ref{sec2.1}. Our model has parameters $\kappa^2>0$ (hopping parameter) and $\alpha>0$ (interaction coupling) and is based on the use of scaled fields, as previously introduced by some of us (see \cite{daVeiga:2022twu} and references cited therein). The scaling we use is noncanonical and corresponds to a partial renomalization of the model. We refer to our scalar model as the {\em hybrid model}.

Considering this hybrid model, we prove that, at least for the domain of small parameters ($0<\kappa^2<\kappa^2_0$ and $0<\alpha<\alpha_0$, for some values of $0<\kappa^2_0\ll1$ and $0<\alpha_0\ll 1$) the model satisfies:
\begin{enumerate}
	\item The model partition function admits a thermodynamic and ultraviolet stable stability bounds, leading to a finite normalized free energy in the thermodynamic, and then the continuum limit, at least in the sense of subsequences; 
	\item The corresponding correlations are finite in the thermodynamic limit; 
	\item \ch {\bf ???? CHECK!} the correlations also exist in the continuum limit; 
	\item A computation of the truncated four-point correlation in the first ordersof a Taylor expansion about $\kappa^2$ and $\alpha$, at the origin, shows that the truncated four-point function is non-vanishing.
\end{enumerate}

Thus, the hybrid model is an interacting nontrivial (non-Gaussian) model. As we know, this is the first example of a field theoretical model supporting the compatibility of quantum mechanics and special relativity.

\appendix
\section{More details on the Proof of Theorem \ref{thm. correl. funct.}}
\label{remark_polymer}
Away from the critical line~\eqref{critical}, e.g. as in the small parameter domain we use, the hybrid model Boltzmann factor~\eqref{action} admits a convergent polymer expansion. Using a truncated Taylor expansion of both the $\alpha$ interaction term and the exponential hopping term  in~\eqref{Interpolation} with respect to the interpolation parameters $\{s_b\}$ (cf.~\eqref{Z-expansion}), and regrouping the resulting terms according to their connected bond supports, the interpolated partition function can be rewritten as
\bea \label{parition-polymer}
Z_\Lambda(\kappa^2,\alpha;\{s\})\,=\,\sum_{\mathcal F\ \mathrm{compatible}}
\;\prod_{\gamma\in\mathcal F} w_{\kappa^2,\alpha}(\gamma;\{s\})\,.
\eea
where $\gamma$ runs over finite polymers (connected sets of bonds), $\mathcal F$ ranges over finite families of pairwise compatible polymers, and $w(\gamma)$ denotes the polymer activity; see \cite{Glimm:1987ng,Simon1993}. \ch Here, connectedness of bonds is defined by the existence of bond common lattice endpoints in bonds and compatibility corresponds to disjointness of polymer supports and guarantees factorization of their activities.
	
Let $\mathbb P$ denote the (countable) set of polymers, equipped with the counting measure weighted by the activities\footnote{With $\mathbb P$ as above, one equips it with the discrete $\sigma$--algebra and defines the (complex) measure $\mu$ by $\mu(A)=\sum_{\gamma\in A} w(\gamma)$ for $A\subset\mathbb P$, so that in particular $\mu(\{\gamma\})=w(\gamma)$.}, i.e.\ $\mu(\{\gamma\})=w(\gamma)$. Define the symmetric compatibility function $\zeta:\mathbb P\times\mathbb P\to\{-1,0\}$ by (here $\gamma\sim\gamma'$ means that the polymer supports are disjoint)
\bea
\zeta(\gamma,\gamma')=
\begin{cases}
		-1, & \gamma\not\sim\gamma',\\
		0,  & \gamma\sim\gamma',
\end{cases}
\eea
and let $g(\gamma,\gamma')\ge0$ be a distance weight growing linearly with the lattice graph distance between $\supp(\gamma)$ and $\supp(\gamma')$. \ch Setting $\zeta_g(\gamma,\gamma')\doteq \zeta(\gamma,\gamma')\,e^{g(\gamma,\gamma')}$, and since $(\kappa^2,\alpha)$ lies in the small parameter domain,and hence in  polymer--convergent (non--critical) regime, the activities are sufficiently small such that there exists a nonnegative function $h:\mathbb P\to[0,\infty)$ and $s_*>0$ such that for all choices of interpolation parameters $\{s_b\}$ with $|s_b|\le s_*$ for every bond $b$, the weighted Koteck\'y--Preiss condition holds uniformly, cf. \cite{Kotecky1986},
\bea \label{K-P}
\sup_{\substack{\{s\}:\,|s_b|\le s_*\\ \forall b}}\,\sum_{\gamma'\in\mathbb P}
|w(\gamma';\{s\})|\;|\zeta_g(\gamma,\gamma')|\,e^{h(\gamma')}\;\le\;h(\gamma),
\qquad \forall\,\gamma\in\mathbb P .
\eea

For each fixed choice of interpolation parameters $\{s_b\}$ with $|s_b|\le s_*$ the polymer activities define a valid polymer model satisfying the weighted Koteck\'y--Preiss condition. Since the bound is uniform in $\{s_b\}$, the conclusions of Theorem~3 of~\cite{Ueltschi2005} hold uniformly in this domain. (The validity of \eqref{K-P} uniformly in $\{s_b\}$ may require additional bounds on the activities $w(\gamma;\{s\})$, which are not addressed here.) Now we define the connected correlation function associated with the interpolation parameters by (cf. \cite{Brydges1986})
\bea
G_\Lambda(b,b')\;\doteq\;\left.	\frac{\partial^2}{\partial s_b\,\partial s_{b'}}
\ln Z_\Lambda(\kappa^2,\alpha;\{s\})\right|_{s=0}\,,
\eea
where $Z_\Lambda(\kappa^2,\alpha;\{s\})$ is given in~\eqref{Interpolation}. In the polymer representation, $G_\Lambda(b,b')$ admits an expansion over connected polymers whose supports intersect both bonds $b$ and $b'$. The weighted Koteck\'y--Preiss condition~\eqref{K-P} then implies exponential decay with respect to the bond distance, and hence
\bea \label{exp decay}
|G_\Lambda(x,y)| \;\le\; C\,e^{-m\,\dist(x,y)},
\eea
where $x,y$ are lattice sites incident to the bonds $b,b'$, respectively, and $C>0$ is independent of $x,y$ and of the number of lattice sites $\Lambda_s$.
	
While a complete proof of~\eqref{exp decay} may require additional work, we briefly indicate the underlying mechanism. The restriction to compatible polymer families in~\eqref{parition-polymer} may be implemented by introducing pairwise factors $1+\zeta(\gamma,\gamma')$, which vanish whenever two polymers overlap. In this representation, the partition function involves only these compatibility factors, whereas the logarithm of the partition function is expanded in terms of connected graphs whose edges carry factors $\zeta(\gamma,\gamma')$. Consequently, derivatives of $\ln Z_\Lambda$ with respect to the interpolation parameters isolate contributions from connected polymer configurations linking the corresponding bonds. The Koteck\'y--Preiss bound \eqref{K-P} provides uniform control over the weights of such connected polymer families. In particular, any connected polymer family contributing to the two-point function between two distant bonds must contain polymers whose supports span a distance comparable to their separation. Since the activity of a polymer is obtained as a product of local expansion factors associated with the bonds in its support, which are uniformly small away from criticality, polymer activities decay exponentially with the size of their supports. The bound \eqref{K-P} then ensures that the sum over all connected polymer families linking the two bonds is exponentially suppressed in their separation.
\begin{proposition}
For any $\gamma\in(0,1)$ the absolute value of the $2n$-point correlation \eqref{2n-pt.funct.} satisfies
\bea \label{2n-pt. bound}
\bigl|G_\Lambda(x_1,\ldots,x_{2n})\bigr|\,\le\,	\dfrac{[e(1-\gamma)]^{-n}}{Z_\Lambda}\,	\int\;\prod_{x\in\Lambda}\, \dfrac{d\phi_{x}}{\sqrt{2\pi}}\,\exp\Bigl[-\dfrac{\gamma}{2}\sum_{x}\phi_{x}^{2}
\,+\,(\kappa^{2}+\alpha)\sum_{b}X_b\Bigr].
\eea
\end{proposition}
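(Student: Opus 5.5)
The plan is to bound the fields of the observable pointwise by splitting off part of the Gaussian weight. Then we replace the polynomial interaction by its exponential majorant, exactly as in the upper-bound part of the proof of Lemma~\ref{lemma_TUV}.

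\emph{Step 1: reduce to a bound on the integrand.} Write the Gaussian measure \eqref{measure} explicitly, so that
\[
Z_\Lambda\,G_\Lambda(x_1,\ldots,x_{2n})=\int\prod_{x}\frac{d\phi_x}{\sqrt{2\pi}}\;e^{-\frac12\sum_x\phi_x^2}\,\phi_{x_1}\cdots\phi_{x_{2n}}\,B_\Lambda(\phi).
\]
Since $Z_\Lambda>0$ (indeed $Z_\Lambda\ge1$ by the remark after the proof of Lemma~\ref{lemma_TUV}), it suffices to bound the absolute value of the integral by moving the modulus inside.

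\emph{Step 2: handle the interaction factor.} The hopping factor $e^{\kappa^2\sum_b X_b}$ is positive. For the product, I would use $|1+\alpha X_b|\le e^{\alpha X_b}$. This holds when $1+\alpha X_b\ge0$, by \eqref{exp_UB}. When $1+\alpha X_b<0$ it can fail, and this is the main obstacle. To deal with it, I would first try to argue via the expansion \eqref{Z-expansion}: expanding $\phi_{x_1}\cdots\phi_{x_{2n}}\prod_b(1+\alpha X_b)$ in monomials, every monomial with nonzero Gaussian integral has only even powers. One then bounds the signed integrand by the corresponding all-positive series, which is termwise dominated by the expansion of $\exp(\alpha\sum_b X_b)$, since $\alpha^{|S|}\prod_{b\in S}X_b$ is one term of $e^{\alpha\sum X_b}$ with coefficient $\le1$. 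This Griffiths-type domination, which is the same mechanism behind \eqref{Z-ineq-1}, yields
\[
|Z_\Lambda G_\Lambda|\le\int\prod_x\frac{d\phi_x}{\sqrt{2\pi}}\,e^{-\frac12\sum_x\phi_x^2}\,|\phi_{x_1}\cdots\phi_{x_{2n}}|\,e^{(\kappa^2+\alpha)\sum_b X_b}.
\]
To keep the absolute values of the external fields compatible with the termwise argument, I would first symmetrize: write $|\phi_{x_i}|\le$ even moments via Cauchy--Schwarz if needed, or simply note that the monomials remain even after inserting the external fields.

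\emph{Step 3: pointwise bound on the external fields.} For each $x_i$ use the elementary inequality
\[
|t|\,e^{-\frac{1-\gamma}{4}t^2}\le \bigl[e(1-\gamma)/2\bigr]^{-1/2},
\]
obtained by maximizing $t\,e^{-ct^2}$, which gives $(2ce)^{-1/2}$ with $c=(1-\gamma)/4$. Distributing $e^{-\frac{1-\gamma}{2}\sum_x\phi_x^2}$ over the $2n$ external fields, each field receives at least $e^{-\frac{1-\gamma}{4}\phi_{x_i}^2}$, counting repeated points with multiplicity. The leftover weight is discarded since it is $\le1$. The product of the $2n$ bounds gives $[e(1-\gamma)]^{-n}$ up to the factor $2^{n}$.

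To match the constant exactly, I would instead pair the fields and use $\phi_{x_i}\phi_{x_j}\le\frac12(\phi_{x_i}^2+\phi_{x_j}^2)$ together with $s\,e^{-\frac{1-\gamma}{2}s}\le [e(1-\gamma)/2]^{-1}$ for $s=\phi^2$. Alternatively, I would apply Hölder to $\prod_i|\phi_{x_i}|\le\frac1{2n}\sum_i|\phi_{x_i}|^{2n}$ and optimize. I would check which bookkeeping reproduces precisely $[e(1-\gamma)]^{-n}$, with $\sup_{s\ge0}s\,e^{-(1-\gamma)s/2}$ being the basic estimate. The remaining Gaussian weight $e^{-\frac\gamma2\sum\phi_x^2}$ together with $e^{(\kappa^2+\alpha)\sum X_b}$ is exactly the integrand of \eqref{2n-pt. bound}, and dividing by $Z_\Lambda$ concludes.

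The main obstacle is Step 2, namely the possibly negative factors $1+\alpha X_b$. The exact constant in Step 3 is only a matter of bookkeeping.
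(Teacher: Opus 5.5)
Your Step 2 departs from the paper, and it is the better route. The paper applies $1+u\le e^{u}$ bond by bond to assert $B_\Lambda\le e^{(\kappa^2+\alpha)\sum_b X_b}$ pointwise, and then takes absolute values. After taking absolute values one needs a bound on $|B_\Lambda|$, and that fails pointwise: a single bond with $\alpha X_b=-3$ (all other fields zero) gives $|1+\alpha X_b|=2>e^{-3}$, and two such bonds break even the signed inequality. So your termwise Griffiths-type comparison is the argument that actually works.

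Run that comparison with the signed observable $O=\phi_{x_1}\cdots\phi_{x_{2n}}$. Every field monomial has a nonnegative $d\mu$-integral, and $e^{\alpha\sum_b X_b}-\prod_b(1+\alpha X_b)$ is a series in the $X_b$ with nonnegative coefficients. Hence $0\le Z_\Lambda G_\Lambda\le\int d\mu\,O\,e^{(\kappa^2+\alpha)\sum_b X_b}\le\int d\mu\,|O|\,e^{(\kappa^2+\alpha)\sum_b X_b}$, with $O\le|O|$ used only at the very end; no Cauchy--Schwarz symmetrization is needed. You should also say why sums and integrals may be interchanged: the expansion converges absolutely, since $\sum_b|X_b|\le d\sum_x\phi_x^2$ and $\kappa^2+\alpha<1/(2d)$.

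The genuine gap is Step 3, and it is not a matter of bookkeeping. Halving the exponent to cover repeated points costs a factor $2^n$. Moreover, your claim that each field receives at least $e^{-\frac{1-\gamma}{4}\phi_{x_i}^2}$ is false at a site of multiplicity $k\ge 3$, where only $e^{-\frac{1-\gamma}{2k}\phi^2}$ per field is available.

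No pairing or H\"older variant can recover $[e(1-\gamma)]^{-n}$ with repeated points. Already at one double point the best pointwise constant is $\sup_t t^2e^{-(1-\gamma)t^2/2}=2/[e(1-\gamma)]$. For highly coincident points the stated bound is actually false. Take $\kappa=\alpha=0$ (hence, by continuity, small couplings on a fixed lattice) and $x_1=\cdots=x_{2n}$. The left side is $(2n-1)!!$ and the right side is $[e(1-\gamma)]^{-n}\gamma^{-\Lambda_s/2}$, which loses as $n\to\infty$.

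What you are missing is that the inequality, and the paper's proof of it, concern pairwise distinct insertion points. Then each $\phi_{x_i}$ sits at its own site and can absorb the whole factor $e^{-\frac{1-\gamma}{2}\phi_{x_i}^2}$. Apply $|t|e^{-\frac{1-\gamma}{2}t^2}\le[e(1-\gamma)]^{-1/2}$ at the $2n$ insertion sites and discard the factors $e^{-\frac{1-\gamma}{2}\phi_x^2}\le 1$ at all other sites; this gives exactly $[e(1-\gamma)]^{-n}$. State distinctness as a hypothesis and Step 3 becomes one line.
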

\begin{proof}
Using \eqref{exp_UB} for each factor in the product over bonds $b$ yields the pointwise bound
\bea \label{B-bound}
B_\Lambda(\phi)\,\le\,\exp\Bigl[(\kappa^{2}+\alpha)\sum_{b}X_b\Bigr]\,.
\eea
	
Inserting \eqref{B-bound} into the definition of $G_\Lambda$ \eqref{2n-pt.funct.} and taking absolute values gives
\bea \label{abs G}
\bigl|G_\Lambda(x_1,\ldots,x_{2n})\bigr|\,\le\,\frac{1}{Z_\Lambda}\,\int\, d\mu(\phi)\;|\phi_{x_1}\cdots\phi_{x_{2n}}|\,	\exp\Bigl[ (\kappa^{2}+\alpha)\sum_{b}X_b\Bigr] \,.
\eea

To remove the field factors downstairs in the integrand, we employ a (convex) decomposition of the Gaussian weight at each site
\bea \label{gamma eq.}
\exp\left\{-\frac{1}{2}\phi_{x}^{2}\right\}\,=\,\exp \left\{-\frac{\gamma}{2}\phi_{x}^{2}\right\}\;\exp\left\{-\frac{1-\gamma}{2}\phi_{x}^{2}\right\}\,,
\qquad \gamma\in(0,1)\,.
\eea

Using \eqref{gamma eq.} only into the numerator of \eqref{abs G} yields
\bea
\bigl| G_\Lambda(x_1,\ldots,x_{2n})\bigr| &\leq& \dfrac{1}{Z_\Lambda}\,\int\; \prod_{x\in\Lambda}\,\dfrac{d\phi_{x}}{\sqrt{2\pi}}\,\left\{ \prod_{i=1}^{2n}|\phi_{x_{i}}|\,\exp \left[-\dfrac{1-\gamma}{2}\phi_{x_{i}}^{2}\right] \right\} 
\vspace{2mm}\nn\\ &&\times\;\exp\!\Bigl[-\frac{\gamma}{2}\sum_{x}\phi_{x}^{2}\,+\,(\kappa^{2}+\alpha)\sum_{b}X_b\Bigr].
\eea

For any $u\in\mathbb{R}$ and $\delta>0$, the function $f(u)=|u|e^{-\frac{\delta}{2}u^{2}}$ attains its maximum at $|u|=\delta^{-1/2}$ (see Remark \ref{phi-bound}), giving
\bea \label{max-bound}
|u|\,e^{-\frac{\delta}{2}u^{2}}\le (e\delta)^{-1/2}.
\eea
	
Setting $\delta=1-\gamma$ and applying \eqref{max-bound} to each insertion point $x_{i}$ produces the uniform bound
\bea \label{phi2n-bound}
\prod_{i=1}^{2n}|\phi_{x_{i}}|\,
\exp\left\{-\frac{1-\gamma}{2}\phi(x_{i})^{2}\right\}\,\leq\, [e(1-\gamma)]^{-n}\,.
\eea
Since the right-hand side is independent of the fields, it factors out of the integral, immediately yielding \eqref{2n-pt. bound}.
\end{proof}

\begin{remark}
Without the decomposition \eqref{gamma eq.}, the right-hand side of \eqref{abs G} would require evaluating the Gaussian correlation involving $2n$ fields, which by Wick's theorem expands into $(2n-1)!!$ pairings and grows factorially with $n$.	

Inequality \eqref{2n-pt. bound} replaces this combinatorial sum by the partition function of a \emph{free} theory with parameter $\gamma$ and coupling $\kappa^{2}+\alpha$, multiplied by the explicit exponential prefactor $[e(1-\gamma)]^{-n}$. Consequently, the dependence on the order $n$ is reduced from factorial to purely exponential, rendering the bound uniform in the insertion points $\{x_{i}\}$ and explicit for large $n$.
\end{remark}

\begin{remark} \label{phi-bound}
To prove \eqref{phi2n-bound} it suffices to bound a single factor. Consider the function $f(u)=|u|\,e^{-\frac{\delta}{2}u^{2}}$ with $\delta=1-\gamma>0$. Since $f$ is even, we may restrict to $u\geq 0$ and maximize $f(u)=u\,e^{-\frac{\delta}{2}u^{2}}$. Differentiating,
\bea
f'(u)=e^{-\frac{\delta}{2}u^{2}}-\delta u^{2}e^{-\frac{\delta}{2}u^{2}}
		=e^{-\frac{\delta}{2}u^{2}}\bigl(1-\delta u^{2}\bigr)\,.
\eea
	
Setting $f'(u)=0$ yields the unique positive critical point $u_{c}=\delta^{-1/2}$. Evaluating $f$ at this point,
\bea
f(u_{c})=\delta^{-1/2}e^{-1/2}\,.
\eea

Since $f(0)=0$ and $\lim_{u\to\infty}f(u)=0$, this is the global maximum. Thus for each $i=1,\dots,2n$,
\bea
|\phi_{x_{i}}|\,e^{-\frac{1-\gamma}{2}\phi_{x_{i}}^{2}}\,\le \bigl[e(1-\gamma)\bigr]^{-1/2}\,.
\eea

Taking the product over all $2n$ insertion points gives the desired bound.
\end{remark}

\begin{corollary} \label{cor. 2-pt funct. 2}
	Let $(\kappa^{2},\alpha)$ lie away from the critical line \eqref{critical}, as is the case of the small parameter domain. Then, \ch if $d=3,4,\ldots$, the two point correlation admits a well-defined upper-bound momentum (spectral) representation in the thermodynamic limit, including coincident-point contributions. Moreover, the two point function remains uniformly bounded in the continuum limit $a \searrow 0$.
\end{corollary}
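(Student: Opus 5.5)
The plan is to bound the two-point correlation from above by the corresponding correlation of the scaled free (Gaussian) model with hopping parameter $\kappa^2+\alpha$, and then to use the explicit momentum representation of that free two-point function.

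\emph{Step 1: pointwise domination.} Expand the product $\prod_b(1+\alpha X_b)$ and the exponential as in \eqref{Z-expansion}. Every Gaussian moment of a monomial is nonnegative, and all coefficients are nonnegative since $\kappa^2,\alpha>0$ (Griffiths I). Hence, term by term,
\bea
0\le \int d\mu(\phi)\,\phi_x\phi_y\,B_\Lambda(\phi)\le \int d\mu(\phi)\,\phi_x\phi_y\,e^{(\kappa^2+\alpha)\sum_b X_b}. \nn
\eea
The denominator is treated separately: $Z_\Lambda\ge 1$ by the remark following the proof of Lemma~\ref{lemma_TUV}.

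\emph{Step 2: Gaussian computation.} The right-hand side equals $Z^{\rm free}_\Lambda\,(B^{-1})(x,y)$, where $B=(\kappa^2+\alpha)(\Delta+r)$ is the operator of \eqref{op. B}. This gives
\bea
0\le G_\Lambda(x,y)\le \frac{Z_\Lambda^{\rm free}}{Z_\Lambda}\,\frac{1}{\kappa^2+\alpha}\,(\Delta+r)^{-1}(x,y). \nn
\eea
The prefactor $Z_\Lambda^{\rm free}/Z_\Lambda$ is exponential in $\Lambda_s$ and cannot be bounded crudely. I would instead compare normalized expectations via Ginibre/Griffiths II-type monotonicity in the coupling. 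Writing $1+\alpha X_b=e^{\alpha X_b}\,g(\alpha X_b)$ does not obviously preserve ferromagnetic structure, so the first thing I would try is differentiating $\langle\phi_x\phi_y\rangle$ along the interpolation $s\mapsto$ (hybrid at $s=0$, free at $s=1$). The derivative is a truncated correlation $\langle\phi_x\phi_y;X_b^k\rangle$, which should be nonnegative by Griffiths II for these even ferromagnetic weights. That yields $G_\Lambda(x,y)\le G^{\rm free}_\Lambda(x,y)$.

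This is the main obstacle. It requires checking that the weights $e^{s\alpha X_b}(1+\alpha X_b)^{1-s}$, or a better interpolation, keep Griffiths II applicable. If that fails, the fallback is the convergent polymer expansion of Theorem~\ref{thm. correl. funct.}, which gives the bound term by term in the small parameter domain.

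\emph{Step 3: spectral representation and limits.} With periodic boundary conditions,
\bea
(\Delta+r)^{-1}(x,y)=\frac{1}{\Lambda_s}\sum_{p}\frac{e^{ip\cdot(x-y)}}{r+\sum_\mu 2(1-\cos p_\mu)}. \nn
\eea
As $\Lambda_s\nearrow\infty$ this Riemann sum converges to an integral over the torus $\mathbb T^d$. The integrand is bounded by $r^{-1}$, and since $r>0$ away from criticality the limit is uniform in $x,y$, including $x=y$. In the scaled variables the lattice spacing enters only through $\kappa^2$ and $\alpha$, which are fixed, so the bound is independent of $a$ and survives $a\searrow 0$. For $d\ge3$ the integral $\int_{\mathbb T^d}\frac{d^dp}{\sum_\mu 2(1-\cos p_\mu)}$ is finite. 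This makes the coincident-point bound uniform even as $r\searrow 0$, and it is the reason the corollary is stated for $d\ge 3$.
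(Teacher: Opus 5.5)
Your Steps 1 and 3 follow the paper's route and are correct: domination by the free model with hopping $\kappa^2+\alpha$, the propagator $(\kappa^2+\alpha)^{-1}(\Delta+r)^{-1}$ and its Fourier representation, $d\ge 3$ for $(\Delta+r)^{-1}(0,0)$ as $r\searrow 0$, and $a$-independence at fixed $\kappa^2,\alpha$. Your term-by-term (Griffiths I) version of Step 1 is in fact the right one, since the pointwise bound $\prod_b(1+\alpha X_b)\le e^{\alpha\sum_b X_b}$ behind \eqref{exp_UB} fails when two factors are negative.

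The genuine gap is Step 2, exactly where you locate it. The corollary concerns the normalized correlation, and what you have is $G_\Lambda(x,y)\le (Z^{\mathrm{free}}_\Lambda/Z_\Lambda)\,(\kappa^2+\alpha)^{-1}(\Delta+r)^{-1}(x,y)$. Expanding both partition functions to second order shows that $Z^{\mathrm{free}}_\Lambda/Z_\Lambda$ behaves like $e^{d\alpha^2\Lambda_s/2}$, so this bound is void in the thermodynamic limit. The paper's own proof does not address this either: it inserts \eqref{exp_UB} and then writes the free momentum integral directly as the bound, silently dropping this ratio.

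The repair you sketch does not work as stated.
\begin{itemize}
\item The hybrid weight is signed ($1+\alpha X_b<0$ for $X_b<-1/\alpha$), so $(1+\alpha X_b)^{1-s}$ is undefined for non-integer $1-s$.
\item Griffiths I survives signed weights with nonnegative-coefficient expansions, but Ginibre's proof of Griffiths II does not. In duplicated variables, $(1+\alpha X_b(\phi))(1+\alpha X_b(\psi))$ contains $\alpha^2X_b(\phi)X_b(\psi)$, which is not in Ginibre's cone: for instance $\int\!\!\int d\mu(\phi)\,d\mu(\psi)\,X_b(\phi)X_b(\psi)\bigl(X_b(\phi)-X_b(\psi)\bigr)^2=-2$.
\item For any interpolation $g_s$ from $1+\alpha X_b$ to $e^{\alpha X_b}$ one has $\partial_s\langle\phi_x\phi_y\rangle_s=\sum_b\langle\phi_x\phi_y;\,\partial_s g_s(X_b)/g_s(X_b)\rangle_s$. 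Here $\partial_s g_s/g_s$ is not a nonnegative combination of the $X_b^k$; for your choice it is $\alpha X_b-\ln(1+\alpha X_b)$, with alternating Taylor coefficients. So the derivative is not the truncated correlation $\langle\phi_x\phi_y;X_b^k\rangle$ you invoke.
\end{itemize}

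The polymer fallback does not close the gap either. The paper's expansion is itself only sketched; it concedes that \eqref{K-P} needs activity bounds ``not addressed here''. Even a complete cluster expansion gives only $|G_\Lambda(x,y)|\le Ce^{-m|x-y|}$ uniformly in $\Lambda$. That yields finiteness and uniform boundedness, but not domination by the free momentum integral.

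What is missing is a genuine comparison inequality $G_\Lambda\le G^{\mathrm{free}}_\Lambda$ for \emph{normalized} correlations. The second-order computation, where the hybrid differs from the free model by $-2d\alpha^2\delta_{xy}$, makes it plausible but does not prove it.

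A smaller point: $r>0$ means $\kappa^2+\alpha<1/2d$, which is not the same as being away from \eqref{critical}; that line lies at $\kappa^2+\alpha=1/2d+\alpha^2$. The condition does hold in the small-parameter domain, which is what you actually use.
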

\begin{proof}
	Using \eqref{exp_UB}, the two--point function admits an upper bound such that the partition function coincides with that of the free scaled scalar field model. Then, using \eqref{op. B}, one immediately finds the symmetric matrix
\bea
M(x,y)=\dfrac{\kappa^{2}+\alpha}{2}\;\sum_\mu\big(\delta_{yx^+_\mu}+\delta_{yx^-_\mu}\big)-\frac{1}{2}\,\delta_{xy}\,.
\eea

The orthonormal eigenvectors $f_p(x)=|\Lambda_s|^{-1/2}e^{ip\cdot x}$ along with the quadratic form $(\phi,M\phi)$, cf. \eqref{op. B}, for the partition function \eqref{hybrid-partition} (after using \eqref{exp_UB}) yields
\bea
\lambda(p)=\frac12\Big(2\,(\kappa^{2}+\alpha)\sum_{\mu}\cos(ap^\mu)-1\Big)\,.
\eea
Consequently, in the thermodynamic limit $\Lambda_s\nearrow\infty$, the two--point function admits the momentum--space representation (see \cite{FariadaVeiga:2022bqi})---as an upper bound \ch{\bf check denominator parenthesis in cos!}
\bea
G_\Lambda (x,y)\,=\,\frac{1}{2(2\pi)^{d}}\;\int_{(-\pi,\pi]^{d}}d^{d}q\,
\dfrac{\exp\left[i\,\dfrac{q\cdot(x-y)}{a}\right]}{\frac{1}{2}\left[\kappa^{2}+\alpha)\,\sum_{\mu=0,\ldots,(d-1)}\,\left(2\cos q^{\mu}-1\right)\right]}\,,
\eea
where the assumption that $(\kappa^{2},\alpha)$ is in the small parameter domain guarantees that the denominator is uniformly bounded away from zero, especially for the contributions at the coincident points, unless $d>2$.
	
The exponential decay \eqref{exp decay} guarantees integrability of the spectral representation \ch for $x\not- y$, and integrability also holds for coincident point contributions. The analysis of the thermodynamic and continuum limits then follows exactly as in appendix~A.2 of \cite{daVeiga:2022twu}, the integral defining $G_\Lambda(x,x)$ is uniformly bounded for $a\in(0,1]$, and the (Lebesgue) dominated convergence yields a finite limit as $a\searrow 0$.
\end{proof}

\section{Expanded Proof of Theorem \ref{thm. 4-pt. funct.}} \label{proof 4.1}

\ch Only in this appendix, we adopt the following notations. Let $X = \{x_1, x_2, x_3, x_4\}$, $Z = \{z, z^+_\mu\}$, and $Z^\prime= \{z^\prime, z^{\prime +}_\mu\}$. Let $G_\Lambda(X)$, or its components $G_i(X)$ (cf.\ the proof of Theorem~\ref{thm. 2-pt. funct.}), denote the integral with respect to the measure~\eqref{measure}, whose integrand contains a product of four fields, like
\bea \label{4phi}
O(X)\doteq\phi_{x_1}\phi_{x_2}\phi_{x_3}\phi_{x_4}\,.
\eea
Specifically,
\bea
G_\Lambda(X)\doteq\int d\mu(\phi)\; O(X),\qquad G_\Lambda(X,Z)\doteq \sum_b\int d\mu(\phi) \:O(X) X_b\,,
\eea
and so on.
Then, using Wick's theorem, one directly computes the components 
(the terms corresponding to the various coefficients, as in the proof of 
Theorem~\ref{thm. 2-pt. funct.}) of the four-point function $G(x)$ as follows.
\bea
G_{1}(X)=\delta_{x_{1}x_{2}}\delta_{x_{3}x_{4}}+\delta_{x_{1}x_{3}}\delta_{x_{2}x_{4}}+\delta_{x_{1}x_{4}}\delta_{x_{2}x_{3}}.
\eea
\bea
G_{2}(X,Z)=&\Big(\kappa^{2}+\alpha\Big)\sum_{\mu}\Big\{ 2\delta_{x_{1}x_{2}}\delta_{x_{3}\left(x_{4}\right)_{\mu}^{-}}+2\delta_{x_{1}x_{3}}\delta_{x_{2}\left(x_{4}\right)_{\mu}^{-}}+2\delta_{x_{1}x_{4}}\delta_{x_{2}\left(x_{3}\right)_{\mu}^{-}}\nn\\&\qquad\quad\quad\:\:\:+\,2\delta_{x_{1}\left(x_{2}\right)_{\mu}^{-}}\delta_{x_{3}x_{4}}+2\delta_{x_{1}\left(x_{3}\right)_{\mu}^{-}}\delta_{x_{2}x_{4}}+2\delta_{x_{1}\left(x_{4}\right)_{\mu}^{-}}\delta_{x_{2}x_{3}}\Big\}\,,
\eea
\ch where, $\mu=0,1,\ldots,(d-1)$. The third term is slightly more involved; in particular, with the notation introduced above, it can be written as
\bea \label{4pt-G3}
G_3(X,Z,Z^\prime)=\frac{1}{2}\left(\kappa^{2}+\alpha\right)^{2}\sum_{b,\,b^\prime} \Bigg\{\int d\mu(\phi) \;O(X) X_b X_{b^\prime}-\int d\mu(\phi)\; O(X) \int d\mu(\phi) X_b X_{b^\prime}\Bigg\}\,,
\eea
which, upon a delicate computation, gives the following. 
The first term produces $7!!$ total terms, among which 78 yield nonzero 
contributions after using $\delta _{zz^+_\mu} = \delta_{z^\prime z^{\prime +}_\mu} = 0$. This further reduces to $72$ after subtracting the nonzero contributions coming from the second term in \eqref{4pt-G3}. The result can be written compactly as follows.
\bea \label{4pt-G3_2}
G_3(X,Z,Z^\prime)&=&\frac{1}{2}\left(\kappa^{2}+\alpha\right)^{2}\,\sum_{b,\,b^\prime}\Bigg\{\delta_{x_{1}x_{2}}\left(\delta_{x_{3}[z}\delta_{x_{4}\{z^{\prime}}\delta_{z_{\nu}^{\prime+}\}z_{\mu}^{+}]}+\delta_{x_{3}[z^{\prime}}\delta_{x_{4}\{z}\delta_{z_{\mu}^{+}\}z_{\nu}^{\prime+}]}\right)\nn\\&&+\,\delta_{x_{1}x_{3}}\left(\delta_{x_{2}[z}\delta_{x_{4}\{z^{\prime}}\delta_{z_{\nu}^{\prime+}\}z_{\mu}^{+}]}+\delta_{x_{2}[z^{\prime}}\delta_{x_{4}\{z}\delta_{z_{\mu}^{+}\}z_{\nu}^{\prime+}]}\right)\nn\\&&+\,\delta_{x_{1}x_{4}}\left(\delta_{x_{2}[z}\delta_{x_{3}\{z^{\prime}}\delta_{z_{\nu}^{\prime+}\}z_{\mu}^{+}]}+\delta_{x_{2}[z^{\prime}}\delta_{x_{3}\{z}\delta_{z_{\mu}^{+}\}z_{\nu}^{\prime+}]}\right)\nn\\&&+\,\delta_{x_{1}z}\left(\delta_{x_{2}[x_{3}}\delta_{x_{4}]\{z^{\prime}}\delta_{z_{\nu}^{\prime+}\}z_{\mu}^{+}}+\delta_{x_{2}z_{\mu}^{+}}\delta_{x_{3}\{z^{\prime}}\delta_{z_{\nu}^{\prime+}\}x_{4}}+\delta_{x_{2}\{z^{\prime}}\delta_{x_{3}(x_{4}}\delta_{z_{\mu}^{+}z_{\nu}^{\prime+}\})}\right)\nn\\&&+\,\delta_{x_{1}z_{\mu}^{+}}\left(\delta_{x_{2}[x_{3}}\delta_{x_{4}]\{z^{\prime}}\delta_{z_{\nu}^{\prime+}\}z}+\delta_{x_{2}z}\delta_{x_{3}\{z^{\prime}}\delta_{z_{\nu}^{\prime+}\}x_{4}}+\delta_{x_{2}\{z^{\prime}}\delta_{x_{3}(x_{4}}\delta_{zz_{\nu}^{\prime+}\})}\right)\nn\\&&+\,\delta_{x_{1}z^{\prime}}\left(\delta_{x_{2}[x_{3}}\delta_{x_{4}]\{z}\delta_{z_{\mu}^{+}\}z_{\nu}^{\prime+}}+\delta_{x_{2}z_{\nu}^{\prime+}}\delta_{x_{3}\{z}\delta_{z_{\mu}^{+}\}x_{4}}+\delta_{x_{2}\{z}\delta_{x_{3}(x_{4}}\delta_{z_{\mu}^{+}\}z_{\nu}^{\prime+})}\right)\nn\\&&+\,\delta_{x_{1}z_{\nu}^{\prime+}}\left(\delta_{x_{2}[x_{3}}\delta_{x_{4}]\{z}\delta_{z_{\mu}^{+}\}z^{\prime}}+\delta_{x_{2}z^{\prime}}\delta_{x_{3}\{z}\delta_{z_{\mu}^{+}\}x_{4}}+\delta_{x_{2}\{z}\delta_{x_{3}(x_{4}}\delta_{z_{\mu}^{+}\}z^{\prime})}\right)\Bigg\}\,,
\eea
where $[\cdots]$ and $\{\cdots\}$ carry the same meaning as in \eqref{2-pt. G3}, while $(\cdots)$ denotes the cyclic permutations of the three indices, 
yielding three terms from each expression. Finally, the fourth term is computed analogously, with $z = z'$ and 
$\mu = \nu$ in \eqref{4pt-G3_2}, yielding
\bea
G_4(X,Z,Z^\prime)&=&-\frac{\alpha^{2}}{2}\sum_{\mu}\Big(4\delta_{x_{1}x_{2}}\delta_{x_{3}x_{4}}+4\delta_{x_{1}x_{3}}\delta_{x_{2}x_{4}}+4\delta_{x_{1}x_{4}}\delta_{x_{2}x_{3}}+6\delta_{x_{1}x_{2}}\delta_{\left(x_{3}\right)_{\mu}^{-}\left(x_{4}\right)_{\mu}^{-}}\nn\\&&+\,6\,\delta_{x_{1}x_{3}}\delta_{\left(x_{2}\right)_{\mu}^{-}\left(x_{4}\right)_{\mu}^{-}}+6\delta_{x_{1}x_{4}\delta_{\left(x_{2}\right)_{\mu}^{-}\left(x_{3}\right)_{\mu}^{-}}}+6\delta_{\left(x_{1}\right)_{\mu}^{-}\left(x_{4}\right)_{\mu}^{-}}\delta_{x_{2}x_{3}}\nn\\&&+\,6\,\delta_{\left(x_{1}\right)_{\mu}^{-}\left(x_{3}\right)_{\mu}^{-}}\delta_{x_{2}x_{4}}+6\delta_{\left(x_{1}\right)_{\mu}^{-}\left(x_{2}\right)_{\mu}^{-}}\delta_{x_{3}x_{4}}\Big),
\eea
where the sum over $z$ has already been taken care of.

Now, the truncated part of the four-point function \eqref{tr-4pt} 
contains three terms, each of which is a product of two two-point functions. The two-point function in question is given directly by \eqref{2pt-G}, together with the relevant terms appearing in its proof, with $x \to x_1$ and $y \to x_2$. Then straightforward algebraic manipulations yield
\bea \label{prod. 2-pt}
\!\!\!\!G_\Lambda(x_1,x_2)\; G_\Lambda(x_3,x_4)\!\! &=&\!\!\delta_{x_{1}x_{2}} \delta_{x_{3}x_{4}}+\left(\kappa^{2}+\alpha\right)\sum_{\mu}\left(2\delta_{x_{1}x_{2}}\delta_{x_{3}\left(x_{4}\right)_{\mu}^{-}}+2\delta_{x_{1}\left(x_{2}\right)_{\mu}^{-}}\delta_{x_{3}x_{4}}\right)\nn\\&&\!\!+\,\frac{1}{2} \left(\kappa^{2}+\alpha\right)^{2}\Bigg(\delta_{x_{1}x_{2}}\left(\delta_{x_{3}[z}\delta_{x_{4}\{z^{\prime}}\delta_{z_{\nu}^{\prime+}\}z_{\mu}^{+}]}+\delta_{x_{3}[z^{\prime}}\delta_{x_{4}\{z}\delta_{z_{\mu}^{+}\}z_{\nu}^{\prime+}]}\right)\nn\\&&\!\!+\,\delta_{x_{3}x_{4}}\left(\delta_{x_{1}[z}\delta_{x_{2}\{z^{\prime}}\delta_{z_{\nu}^{\prime+}\}z_{\mu}^{+}]}+\delta_{x_{1}[z^{\prime}}\delta_{x_{2}\{z}\delta_{z_{\mu}^{+}\}z_{\nu}^{\prime+}]}\right)\Bigg)\nn\\
&&\!\!-\,\frac{\alpha^{2}}{2}\sum_{\mu}\left(4\delta_{x_{1}x_{2}}\delta_{x_{3}x_{4}}+2\delta_{x_{1}x_{2}}\delta_{\left(x_{3}\right)_{\mu}^{-}\left(x_{4}\right)_{\mu}^{-}}+2\delta_{\left(x_{1}\right)_{\mu}^{-}\left(x_{2}\right)_{\mu}^{-}}\delta_{x_{3}x_{4}}\right)\,\!.
\eea
The remaining two terms are then obtained by the direct observation in \eqref{prod. 2-pt}, with $x_2 \leftrightarrow x_3$ and $x_2\leftrightarrow x_4$, respectively. Thus, combining everything, simplifying the resulting expressions, and summing over $z$ and $z^\prime$, we finally obtain
\bea
G^{\mathrm{T}}(X)&=&-4d\alpha^{2}\Big(\delta_{x_{1}x_{2}}\delta_{x_{3}x_{4}}+\delta_{x_{1}x_{3}}\delta_{x_{2}x_{4}}+\delta_{x_{1}x_{4}}\delta_{x_{2}x_{3}}\Big)\nn\\&&+\,2\left(\kappa^{2}+\alpha\right)^{2}\sum_{\mu,\nu}\Big\{ \delta_{x_{1}\left(x_{2}\right)_{\mu}^{-}}\delta_{x_{3}\left(x_{4}\right)_{\nu}^{-}}+\delta_{x_{1}\left(x_{3}\right)_{\mu}^{-}}\delta_{x_{2}\left(x_{4}\right)_{\nu}^{-}}\nn\\&&+\,\delta_{x_{1}\left(x_{4}\right)_{\mu}^{-}}\delta_{x_{2}\left(x_{3}\right)_{\nu}^{-}}+\delta_{x_{1}\left(x_{2}\right)_{\nu}^{-}}\delta_{x_{3}\left(x_{4}\right)_{\mu}^{-}}+\delta_{x_{1}\left(x_{3}\right)_{\nu}^{-}}\delta_{x_{2}\left(x_{4}\right)_{\mu}^{-}}\nn\\&&+\,\delta_{x_{1}\left(x_{4}\right)_{\nu}^{-}}\delta_{x_{2}\left(x_{3}\right)_{\mu}^{-}}\Big\}.
\eea

\bibliographystyle{utcaps.bst}
\bibliography{Ref.bib}

\end{document}